\title{Recent mathematical advances in coupled cluster theory}
\author{Fabian M. Faulstich\footnotemark[1]}
\newtheorem{theorem}{Theorem}
\newtheorem{lemma}[theorem]{Lemma}
\newtheorem{proposition}[theorem]{Proposition}
\newtheorem{example}[theorem]{Example}
\begin{document}

\maketitle

\begin{abstract}
This article presents an in-depth educational overview of the latest mathematical developments in coupled cluster (CC) theory, beginning with Schneider's seminal work from 2009 that introduced the first local analysis of CC theory. We offer a tutorial review of second quantization and the CC ansatz, laying the groundwork for understanding the mathematical basis of the theory. This is followed by a detailed exploration of the most recent mathematical advancements in CC theory.
Our review starts with an in-depth look at the local analysis pioneered by Schneider which has since been applied to analyze various CC methods. We then move on to discuss the graph-based framework for CC methods developed by Csirik and Laestadius. This framework provides a comprehensive platform for comparing different CC methods, including multireference approaches.
Next, we delve into the latest numerical analysis results analyzing the single reference CC method developed by Hassan, Maday, and Wang. This very general approach is based on the invertibility of the CC function's Fréchet derivative.
We conclude the article with a discussion on the recent incorporation of algebraic geometry into CC theory, highlighting how this novel and fundamentally different mathematical perspective has furthered our understanding and provides exciting pathways to new computational approaches. 
\end{abstract}

\footnotetext[1]{Department of Mathematical Sciences, Rensselaer Polytechnic Institute, NY, 12180, USA}    
        
\tableofcontents
\newpage 

\section{Introduction}

Coupled-cluster (CC) theory is a widely acclaimed, high-precision wave function approach that is used in computational quantum many-body theory and is of great interest to both practitioners as well as theoreticians~\cite{bartlett2007coupled}. 
The origin of CC theory dates back to 1958 when Coester proposed to use an exponential parametrization of the wave function~\cite{coester1958bound}.
This parametrization was independently derived by Hubbard~\cite{hubbard1957description} and Hugenholtz \cite{hugenholtz1957perturbation} in 1957 as an alternative to summing many-body perturbation theory contributions order by order.
A milestone of CC theory is the work by {\v{C}}{\'i}{\v{z}}ek from 1966~\cite{vcivzek1966correlation}.
In this work, {\v{C}}{\'i}{\v{z}}ek discussed the foundational concepts of second quantization (as applied to many-fermion systems), normal ordering, contractions, Wick’s theorem, normal-ordered Hamiltonians (which was a novelty at that time), Goldstone-style diagrammatic techniques, and the origin of the exponential wave function ansatz.
He moreover derived the connected cluster form of the Schr\"odinger equation and proposed a general recipe for how to obtain the energy and amplitude equations through projections of the connected cluster form of the Schr\"odinger equation on the reference and excited determinants, which was illustrated using the CC doubles (CCD) approximation. 
This work also reported the very first CC computations, using full and linearized forms of CCD, for nitrogen (treated fully at the \emph{ab initio} level) and benzene (treated with a PPP model Hamiltonian).

In this article, we review the most recent mathematical advances from a computational chemistry perspective. Our objective is to elucidate various mathematical frameworks, their objectives, and outcomes in a manner that is accessible to a wide computational chemistry audience. In doing so, we aim to make the complex mathematical concepts accessible to a broader audience, providing a clear and comprehensible pathway for readers who may not have an extensive background in the advanced mathematics typically necessary to fully engage with the original research articles. With this effort, our goal is to render these mathematical results not only understandable but also directly applicable and relevant for practitioners and researchers in the field of computational chemistry.

While this article centers on the mathematical developments in CC theory post-2009, following the landmark work by Schneider, we recognize that there were significant contributions and advancements in the field before this date. However, our focus remains on the period after 2009, showcasing the progress made in recent years. Providing a full account of the rich history of CC theory and the mathematical advances therein is beyond the scope of this article, the interested reader is referred to articles and the references therein that provide insight into the history and development of CC theory including those by Bartlett~\cite{bartlett2005theory}, Paldus~\cite{paldus2005beginnings}, Arponen~\cite{arponen1991independent}, and Bishop~\cite{bishop1991overview}.

The following article is outlined as follows: In Sec.~\ref{sec.MatrixStructure} we provide a brief review of the mathematical matrix structures that arise in the second quantized framework. 
In Sec.~\ref{sec:CCAnsatz} we then introduce the CC ansatz using an algebraic formulation.
In Sec.~\ref{sec:Analysis} we then review the mathematical results established by employing a local strong monotonicity approach (Sec.~\ref{sec:LocalAnalysis}), the excitation graph approach (Sec.~\ref{sec:CCExcitationGraph}) and the inf-sup condition approach (Sec.~\ref{sec:Inf-sup}).
In Sec.~\ref{sec:RootStructure} we then elaborate on the root structure of the CC equations and review the advances made along this line by employing an algebraic geometry perspective.

\section{Brief review of second quantization}
\label{sec.MatrixStructure}

In this section, we review the second quantization framework with a slight mathematical twist. Our aim is to resolve any ambiguities surrounding concepts that have been a potential subject of debate within either the mathematical or chemical community.
Considering an $N$ electron system, we denote the set $\mathcal{B}$ with $|\mathcal{B}|=N_B\gg N$ the set of molecular orbitals, comprising of $L^2$-orthonormal functions, i.e., 
\begin{equation}
\langle
\xi_i,\xi_j
\rangle_{L^2(X)}
=
\int_{X} \xi_i^*(x) \xi_j(x) d\lambda(x)
\qquad \forall~1\leq i,j\leq N_B,
\end{equation}
where $X=\mathbb{R}^3 \times \{ \pm 1/2 \}$ and $d\lambda(x)$ denotes the corresponding integration measure~\cite{fremlin2000measure}. Mathematically, the integral measure $d\lambda(x)$ is a product measure introduced to combine spatial and spin integration, it can also be written as
\begin{equation}
\int_{X} f(\mathbf{x}) d\lambda(\mathbf{x})
=
\sum_{s \in \{\pm 1/2 \}} \int_{\mathbb{R}^3} f(\mathbf{r},s) d{\bf r},
\end{equation}
where $\mathbf{r} \in \mathbb{R}^3$ denotes the spatial degree of freedom and $s\in \{\pm 1/2\}$ is the spin degree of freedom.
We moreover assume that the functions in $\mathcal{B}$ are sufficiently smooth allowing us to take all required derivatives. 
Note that in computations that use Gaussian-type orbitals, this is always the case. 
Mathematically, the largest space (i.e., the most general space) from which we can choose $\mathcal{B}$ is the Sobolev space $H^1(X)$~\cite{leoni2017first,aubin2011applied}, however, for sake of simplicity, one can assume twice continuously differentiable and $L^2$-integrable functions.
In any case, we conclude that the molecular orbitals span a finite-dimensional Hilbert space $h\subset H^1(X)$ which we shall denote the single-particle space. 

Next, we define multi-particle functions that are used to span the fermionic Fock space. 
Due to the anti-symmetry constraints of the wave function, we need to take the anti-symmetrized product also called the {\it exterior product}:
Let $\xi_1,...,\xi_M\in\mathcal{B}$, we define the $M$-folded exterior product of  $\xi_1,...,\xi_M$ (pointwise) by
\begin{equation}
\xi_1\wedge ...\wedge \xi_M (\mathbf{x}_1,...,\mathbf{x}_M)
=
\sum_{\pi \in S_M}
{\rm sign}(\pi)
\prod_{i=1}^M
\xi_{\pi(i)}(\mathbf{x}_i),
\end{equation}
where $S_M$ is the {\it symmetric group} describing all possible permutations of the set $\{1,...,M\}$ and ${\rm sign}(\pi)$ is the parity of the permutation $\pi$.

\begin{example}
Let $\xi_1,\xi_2\in\mathcal{B}$ be two molecular orbitals. The exterior product of $\xi_1$ and $\xi_2$ is pointwise given by 
\begin{equation}
\xi_1\wedge \xi_2 (\mathbf{x}_1, \mathbf{x}_2)
=
\xi_1(\mathbf{x}_1) \xi_2 (\mathbf{x}_2) - \xi_1(\mathbf{x}_2) \xi_2 (\mathbf{x}_1).
\end{equation}
\end{example}

Given the set $\mathcal{B}$, one can form ${N_B \choose M}$ linearly independent exterior products, which define the set $\mathfrak{B}^{(M)}$. 
The space $\mathcal{H}^{(M)}$, spanned by these functions is the {\it $M$-folded exterior power of $h$}, and it inherits an inner product from the single particle space $h$:
Let $\Psi_I =  \xi_{i_1} \wedge ... \wedge \xi_{i_M}$ and $\Psi_J =  \xi_{j_1} \wedge ... \wedge \xi_{j_M}$ then 
\begin{equation}
\langle \Psi_I, \Psi_J \rangle 
= 
\sum_{\substack{ \pi \in S_I\\  \sigma \in S_J}} \prod_{p=1}^M \langle \xi_{\pi(i_p)}, \xi_{\sigma(j_p)} \rangle_{L^2(X)},
\end{equation}
where $S_I$ and $S_J$ are the permutations of $\{i_1,...,i_M\}$ and $\{j_1,...,j_M\}$, respectively.
Normalizing the ${N_B \choose M}$ exterior products obtained from $\mathcal{B}$ by $1/\sqrt{M!}$ yields the well-known definition of $M$-particle Slater determinants, i.e., 
\begin{equation}
\begin{aligned}
\Psi[i_1,...,i_M](\mathbf{x}_1,...,\mathbf{x}_M)
&=
\frac{1}{\sqrt{M!}} 
\sum_{\sigma \in S_M}
{\rm sign}(\pi)
\prod_{i=1}^M
\xi_{\pi(i)}(\mathbf{x}_i)\\
&=
\frac{1}{\sqrt{M!}}
{\rm det} 
\left(
\begin{bmatrix}
\xi_{1}(\mathbf{x}_1) & \cdots & \xi_{1}(\mathbf{x}_M)\\
\vdots & \ddots & \vdots\\
\xi_{M}(\mathbf{x}_1) & \cdots & \xi_{M}(\mathbf{x}_M)
\end{bmatrix}
\right).
\end{aligned}
\end{equation}
To avoid linear dependence in the set of $M$-particle Slater determinants, we assume $i_1<...,<i_M$ which yields ${N_B \choose M}$ possible exterior products formed from $\mathcal{B}$. 
The direct sum of the $M$-particle spaces for $M=0,..., N_B$ yields the fermionic Fock space $\mathcal{F}$:
\begin{equation}
\label{eq:FockSpace}
\mathcal{F} = \bigoplus_{M=0}^{N_B} \mathcal{H}^{(M)},
\end{equation}
which is known as the {\it Grassmann algebra on $h$} in the mathematics community.
For brevity, we will employ the Dirac notation writing the basis elements in $\mathcal{F}$ as
\begin{equation}
|s_1,...,s_{N_B}\rangle = \frac{1}{\sqrt{M!}}
\xi_1^{s_1} \wedge \xi_2^{s_2} \wedge ... \wedge \xi_{N_B}^{s_{N_B}}
\end{equation}
where $M = \sum_{i} s_i$ and $ s_i\in \{0,1\}$ for all $i=1,...,N_{B}$.
A general element in $\mathcal{F}$, is then given as 
\begin{equation}
\label{eq:LinearExpansion}
|\Psi \rangle
=
\sum_{s_1,...,s_{N_B} \in \{ 0, 1\}}
\Psi(s_1,...,s_{N_B}) |s_1,...,s_{N_B} \rangle
\end{equation}
where $\Psi(s_1,...,s_{N_B}) \in \mathbb{C}$. 
We now define the fermionic creation and annihilation operators, i.e., \\
\begin{equation}
\label{eq:FermionicCreation}
\begin{aligned}
a_p^\dag: \mathcal{F} \to \mathcal{F} ~;~ 
|s_1,...,s_{N_B} \rangle
&\mapsto (-1)^{\sigma(p)} (1-s_p) |s_1,...s_{p-1}, 1-s_p, s_{p+1},...,s_{N_B} \rangle\\
a_p: \mathcal{F} \to \mathcal{F} ~;~|s_1,...,s_{N_B} \rangle
&\mapsto (-1)^{\sigma(p)} s_p |s_1,...s_{p-1}, 1-s_p, s_{p+1},...,s_{N_B} \rangle
\end{aligned}
\end{equation}
where $\sigma(p) = \sum_{q=1}^{p-1} s_q$. 
We note that ${\rm dim}(\mathcal{F}) = 2^{N_B}$, we therefore identify elements of the fermionic Fock space $\mathcal{F}$ uniquely with elements in $\mathbb{C}^{2^{N_B}}$. Mathematically, we write $\mathcal{F} \simeq \mathbb{C}^{2^{N_B}}$ which means that the spaces $\mathcal{F}$ and $\mathbb{C}^{2^{N_B}}$ are essentially the same in their structure. We moreover introduce the convention
$$
{1 \choose 0} \equiv {\rm unoccupied } \quad {\rm and} \quad 
{0 \choose 1} \equiv {\rm occupied. } 
$$
Note that this is an arbitrary choice, but it is the commonly employed convention. This allows us to express the basis elements as
\begin{equation}
|s_1,...,s_{N_B} \rangle 
=
{ 1 - s_1 \choose s_1 } \otimes ... \otimes { 1 - s_{N_B} \choose s_{N_B} }. 
\end{equation}

\begin{example}
Let $N_B=2$, then
\begin{equation}
| 0 1 \rangle
=
{1 \choose 0} \otimes {0 \choose 1}
=
\begin{pmatrix}
0\\0\\0\\1
\end{pmatrix}
\end{equation}
\end{example}

In this formulation, the fermionic creation and annihilation operators in Eq.~\eqref{eq:FermionicCreation} are matrices of the form
\begin{equation}
\label{eq:CreationAndAnnihilation}
\begin{aligned}
a_p^\dagger
=
\underbrace{\sigma_z \otimes ... \otimes \sigma_z}_{p-1~{\rm times}} 
\otimes\;
a^\dagger
\otimes 
\underbrace{I \otimes ... \otimes I}_{N_B-p-1~{\rm times}}
\quad 
{\rm and}
\quad
a_p
=
\underbrace{\sigma_z \otimes ... \otimes \sigma_z}_{p-1~{\rm times}} 
\otimes\;
a
\otimes 
\underbrace{I \otimes ... \otimes I}_{N_B-p-1~{\rm times}}
\end{aligned}
\end{equation}
where 
\begin{equation}
I = 
\begin{pmatrix}
1 & 0 \\
0 & 1
\end{pmatrix},\quad
\sigma_z = 
\begin{pmatrix}
1 & 0 \\
0 & -1
\end{pmatrix},\quad
a = 
\begin{pmatrix}
0 & 1 \\
0 & 0
\end{pmatrix}.
\end{equation}

\begin{example}
Let $N_B=3$, then
\begin{equation}
a_2 = 
\sigma_z \otimes a \otimes I
=
\begin{footnotesize}
\begin{pmatrix}
0     &0     &1     &0     &0     &0     &0     &0\\
0     &0     &0     &1     &0     &0     &0     &0\\
0     &0     &0     &0     &0     &0     &0     &0\\
0     &0     &0     &0     &0     &0     &0     &0\\
0     &0     &0     &0     &0     &0    &-1     &0\\
0     &0     &0     &0     &0     &0     &0    &-1\\
0     &0     &0     &0     &0     &0     &0     &0\\
0     &0     &0     &0     &0    & 0     &0     &0\\
\end{pmatrix}
\end{footnotesize}
\end{equation}
\end{example}
These matrices are sparse and have several properties~\cite{helgaker2014molecular}.  
We first note that the definition given in Eq.~\eqref{eq:CreationAndAnnihilation} implies directly that the creation and annihilation operators are nilpotent, i.e., $(a_p^\dagger)^2 = (a_p)^2 = 0$.
Moreover, the fermionic creation and annihilation operators obey the canonical anti-communication relation (CAR):
\begin{equation}
[a_p, a_q]_+ = [a_p^\dag, a_q^\dag]_+ = 0 \quad \text{and} \quad  [a_p, a_q^\dag]_+ = \delta_{p,q}.  
\end{equation}
Lastly, we define the {\it number operator} $n_p = a_p^\dag a_p$ satisfying
\begin{equation}
n_p |s_1,...,s_{N_B} \rangle = s_p |s_1,...,s_{N_B} \rangle
\end{equation}
and the {\it total number operator} $N=\sum_{p=1}^{N_B} n_p$.
In this formulation, the matrix describing an interacting electronic system in a potential generated by clamped nuclei, i.e., the Hamiltonian, takes the form
\begin{equation}
\label{eq:ElecHam}
H
=
\sum_{p,q} h_{p,q} a_p^\dag a_q
+
\frac{1}{4}
\sum_{p,q,r,s} v_{p,q,r,s} a_p^\dag a_q^\dag a_s a_r,
\end{equation}
where $h\in\mathbb{C}^{N_B \times N_B}$ and $v \in \mathbb{C}^{N_B \times N_B \times N_B \times N_B}$ are system dependent integral tensors. They are defined via 
\begin{equation}
h_{p,q} = \int_X \xi_p^*(x) \left(
-\frac{\Delta}{2} - \sum_{j}\frac{Z_j}{|r_1 - R_j|}
\right)\xi_q(x)d\lambda(x)
\end{equation}
and 
\begin{equation}
v_{p,q,r,s} 
=
\int_{X \times X}
\frac{
\xi_p^*(x_1) \xi_q(x_1) \xi_r^*(x_2) \xi_s(x_2)
}{|r_1 - r_2|} d\lambda(x) d\lambda(x).
\end{equation}
Note that $h$ is hermitian and $v$ fulfills the symmetry relations
\begin{equation}
v_{p,q,r,s}
=
v_{r,s,p,q}
=
v_{q,p,s,r}^*
=
v_{s,r, q,p}^*
\end{equation}
or in the case of real-valued atomic spin orbitals
\begin{equation}
v_{p,q,r,s}
=
v_{r,s,p,q}
=
v_{q,p,s,r}
=
v_{s,r, q,p}
=
v_{q,p,r,s}
=
v_{s,r,p,q}
=
v_{p,q,s,r}
=
v_{r,s, q,p}.
\end{equation}
The goal is now to compute the lowest lying eigenstate of the matrix $H$ in the $N$-particle subspace $\mathcal{H}^{(N)}$ which is the $N$-particle ground state energy of the electronic Schr\"odinger equation, i.e., 
\begin{equation}
E_0
=
\min_{
\substack{| \Psi \rangle \in \mathcal{F}\\ 
\langle \Psi | \Psi\rangle = 1}
}
\langle\Psi | H - \mu N |\Psi \rangle, 
\end{equation}
where $\mu$ is a Lagrange multiplier ensuring that the solution $|\Psi \rangle$ lies in the $N$-particle Hilbert space.

\section{The CC ansatz}
\label{sec:CCAnsatz}

Coupled cluster theory is built upon an exponential ansatz of the wave function, as opposed to the linear ansatz of Eq.~\eqref{eq:LinearExpansion}.
We emphasize that the simple approach of projecting the Hamitlonian onto much smaller, manageable linear subspaces of $\mathcal{F}$ proves inadequate for electronic structure problems. This is exemplified in the case study of lithium hydride, where we analyze the lowest eigenstate for different relative positions of $\mathbf{R}_1$ and $\mathbf{R}_2$, denoted $R = \Vert \mathbf{R}_1 - \mathbf{R}_2\Vert$ (see Fig.~\ref{Fig:CCvsCI}). Through this examination, it becomes clear that essential energies, such as the chemical bonding energy, are not accurately captured when using a limited linear subspace, see $E_{\rm bond}^{\rm CISD}$ in Fig.~\ref{Fig:CCvsCI}. Conversely, CC theory provides a far more accurate representation of this critical energy, see $E_{\rm bond}^{\rm CCSD}$ in Fig.~\ref{Fig:CCvsCI}.
\begin{figure}[ht!]
    \centering
    \includegraphics[width=0.5\linewidth]{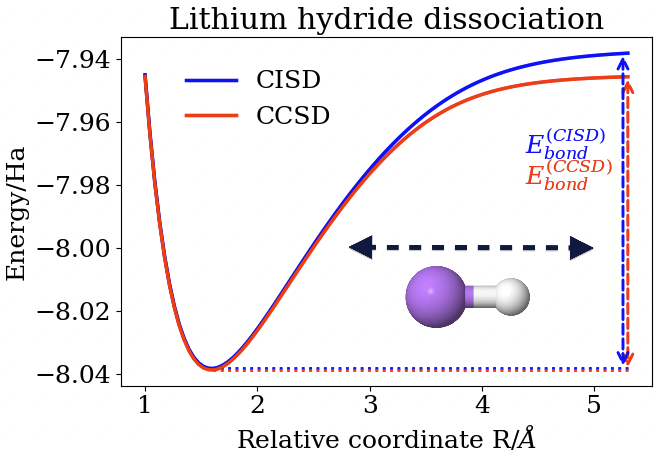}  
    \caption{Case study of lithium hydride comparing the linear parametrization (blue) and the exponential parametrization (red) for different values of $R$ in the AUG-cc-pVTZ basis set~\cite{helgaker2014molecular}.}
    \label{Fig:CCvsCI}
\end{figure}

In order to derive the exponential ansatz in a mathematically sound way, we need to introduce a few concepts first, starting with excitation matrices. 

\subsection{Excitation and cluster matrices}
Since we started the characterization of the fermionic Fock space with the molecular orbitals, the Hartree-Fock state is given by
$$
| \Psi_0 \rangle 
= |1,...,1 ,0,...0\rangle 
= {0 \choose 1} \otimes \cdots \otimes {0 \choose 1} \otimes {1 \choose 0} \otimes \cdots \otimes {1 \choose 0}
\in \mathcal{H}^{N},
$$
where the first $N$ entries are set to one, and the remaining entries are zero. 
We refer to this vector as the reference determinant.
We moreover define $v_{\rm occ} = [\![ N ]\!] = \{1,...,N\}$ and $v_{\rm virt} = [\![ N_B]\!] \setminus [\![ N ]\!] = \{N +1,...,N_B\}$.
Assume $a_1,...,a_k \in v_{\rm virt}$, and $i_1,...,i_k \in v_{\rm occ}$. 
Then, 
$$
X_{a_1,...,a_k \choose i_1,...,i_k}
=
a_{a_k}^\dagger ... a_{a_1}^\dagger a_{i_1} ... a_{i_k}
$$
defines an excitation matrix, and the set of all excitation matrices is given by
$$
\mathfrak{E}(\mathcal{H}^{(N)})
=
\left\lbrace 
X_\mu ~\Big|~ \mu = {a_1,...,a_k \choose i_1,...,i_k},\, a_j \in  v_{\rm virt}, \, i_j \in v_{\rm occ},\, k\leq N
\right\rbrace.
$$
Note that the above construction of the excitation matrices yields that excitation matrices are particle number preserving.
The excitation indices $\mu$ that excite from the occupied into the virtual orbitals define the multi-index set
\begin{equation}
\label{IndexSetI}
\mathcal{I}
=
\left\lbrace
\mu ~\Bigg|~
\mu = {a_1,...,a_k \choose i_1,...,i_k},\, a_j \in  v_{\rm virt}, \, i_j \in v_{\rm occ},\, 1\leq k\leq N
\right\rbrace.
\end{equation}
Since this set of excitations corresponds to simply replacing indices in the string $[1,..., N]$ with indices in the string $[N+1,..., N_B]$ (plus some additional permutation), we deduce that there is a one-to-one relation between excitation operators and Slater determinants except for the reference Slater determinant $|\Psi_0\rangle$. 
In other words, the excitation operators map the reference Slater determinant $|\Psi_0\rangle$ to all other Slater determinants.

\begin{theorem}
\label{th:Basis<->operator}
There exists a one-to-one relation between the $N$-particle basis functions $\mathfrak{B}^{(N)}$ and $\mathfrak{E}(\mathcal{H}^{(N)})\cup\{I\}$.
\end{theorem}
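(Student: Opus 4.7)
The plan is to prove the bijection by constructing an explicit map $\Phi: \mathfrak{E}(\mathcal{H}^{(N)}) \cup \{I\} \to \mathfrak{B}^{(N)}$ and then verifying that it is well defined, injective, and surjective. The natural candidate is $\Phi(X_\mu) = X_\mu | \Psi_0 \rangle$ (up to an overall sign coming from the anti-commutation relations) and $\Phi(I) = | \Psi_0 \rangle$. Because each $X_\mu$ annihilates a collection of occupied orbitals $\{i_1,\ldots,i_k\} \subset v_{\rm occ}$ and then creates a collection of virtual orbitals $\{a_1,\ldots,a_k\} \subset v_{\rm virt}$, and because $a_p^\dagger$ and $a_p$ are nilpotent, I would first verify that $X_\mu | \Psi_0 \rangle$ is either zero or, up to sign, an element of $\mathfrak{B}^{(N)}$. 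The nilpotency together with the requirement $i_j \in v_{\rm occ}$, $a_j \in v_{\rm virt}$ rules out the first case, so $\Phi$ is well defined.

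Next I would establish injectivity. If $\mu \neq \mu'$ are two multi-indices from $\mathcal{I}$, then either the sets of occupied indices $\{i_1,\ldots,i_k\}$ differ or the sets of virtual indices $\{a_1,\ldots,a_k\}$ differ. In either case, the resulting Slater determinants have different occupation strings $(s_1,\ldots,s_{N_B})$, and hence correspond to different elements of $\mathfrak{B}^{(N)}$. The identity is handled separately: $I | \Psi_0 \rangle = | \Psi_0 \rangle$, and no $X_\mu$ with $k \geq 1$ in $\mathcal{I}$ fixes $| \Psi_0 \rangle$, so the image of $I$ is distinct from the images of all excitation matrices.

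For surjectivity, I would argue that any $|s_1,\ldots,s_{N_B}\rangle \in \mathfrak{B}^{(N)}$ is determined by the subset $S \subset [\![N_B]\!]$ of indices with $s_p = 1$, which has cardinality $N$. Comparing $S$ with the reference set $[\![N]\!]$, let $\{i_1,\ldots,i_k\} = [\![N]\!] \setminus S \subset v_{\rm occ}$ and $\{a_1,\ldots,a_k\} = S \setminus [\![N]\!] \subset v_{\rm virt}$. If $k = 0$ we recover $|\Psi_0\rangle$, the image of $I$; if $k \geq 1$, the excitation index $\mu = \binom{a_1,\ldots,a_k}{i_1,\ldots,i_k} \in \mathcal{I}$ produces this determinant under $\Phi$.

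As a sanity check I would verify the cardinality count: $|\mathfrak{B}^{(N)}| = \binom{N_B}{N}$, while $|\mathfrak{E}(\mathcal{H}^{(N)})| + 1 = \sum_{k=0}^{N} \binom{N}{k}\binom{N_B-N}{k} = \binom{N_B}{N}$ by the Vandermonde identity, confirming that the constructed map is a bijection between finite sets of equal cardinality. The main obstacle is bookkeeping around signs and the ordering conventions in the definitions of $X_\mu$ and of the Slater determinants: one must be careful that the prescription above yields each element of $\mathfrak{B}^{(N)}$ exactly once (not merely up to sign), which requires fixing the convention $i_1 < \cdots < i_M$ used to define $\mathfrak{B}^{(M)}$ and matching it with an analogous ordering on the multi-indices in $\mathcal{I}$.
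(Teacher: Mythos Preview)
Your proposal is correct and follows essentially the same route as the paper: both construct the correspondence via $X_\mu \mapsto X_\mu|\Psi_0\rangle$ (with $I \mapsto |\Psi_0\rangle$) and recover the inverse by comparing the occupied set $S$ of a given Slater determinant with the reference set $[\![N]\!]$ to read off the unique multi-index $\mu$. Your write-up is more thorough---the paper's proof essentially only spells out the surjectivity direction and appeals to canonical ordering for uniqueness, whereas you separately check well-definedness, injectivity, surjectivity, and add the Vandermonde cardinality check as a sanity test---but the underlying argument is the same.
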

\begin{proof}
Since excitation matrices are defined w.r.t.~the reference determinant $|\Psi_0 \rangle$ it follows immediately that $|\Psi_0\rangle= I |\Psi_0\rangle$.  
Consider $|\Psi_P \rangle = \xi_{p_1} \wedge ... \wedge \xi_{p_N} \in \mathcal{H}^{(N)}$. 
Comparing $\{1,...,N\}$ to $\{p_1,...,p_N\}$ we can identify a multi-index $\mu$ describing the indices that have to be changed in $\{1,...,N\}$ to obtain $\{p_1,...,p_N\}$. 
More precisely, $\mu$ describes an excitation from $v_{\rm occ}\setminus P$ to $P \cap v_{\rm virt}$.
Due to the canonical ordering, this multi-index $\mu$ is unique. 
Then, by definition we obtain $| \Psi_P\rangle = {\rm sign}(\mu) X_\mu |\Psi_0\rangle$, which shows the claim.
\end{proof}
The above result is the fundamental result that allows us to express any target wave function $|\Psi\rangle\in\mathcal{H}^{(N)}$ through a wave operator applied to the reference determinant instead of an expansion through basis vectors, i.e., 
\begin{equation}
\label{eq:waveoperator}
|\Psi \rangle = \left(c_0 I + \sum_\mu c_\mu  X_\mu  \right) |\Psi_0\rangle.
\end{equation}

We will now focus on certain properties that ensure the later discussed exponential formulation is properly defined. The first property we consider is the commutativity of the excitation matrices.

\begin{proposition}
\label{ExcitationMatCommutieren}
Let $X_\mu, X_\nu \in \mathfrak{E}(\mathcal{H}^{(N)})$. Then
$
[X_\mu, X_\nu] = 0
$.
\end{proposition}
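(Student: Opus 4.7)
My approach would be to reduce the statement to a combinatorial counting of sign flips coming from the canonical anti-commutation relations (CAR), exploiting the key structural fact that $v_{\rm occ}$ and $v_{\rm virt}$ are \emph{disjoint}.

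First I would unpack both operators as explicit products of $2k$ respectively $2\ell$ elementary fermionic operators, writing
\[
X_\mu = a_{a_k}^\dagger\cdots a_{a_1}^\dagger\, a_{i_1}\cdots a_{i_k},\qquad
X_\nu = a_{b_\ell}^\dagger\cdots a_{b_1}^\dagger\, a_{j_1}\cdots a_{j_\ell},
\]
where $a_1,\dots,a_k,b_1,\dots,b_\ell \in v_{\rm virt}$ and $i_1,\dots,i_k,j_1,\dots,j_\ell \in v_{\rm occ}$. The point is that every elementary operator occurring in $X_\mu$ anti-commutes with every elementary operator occurring in $X_\nu$. Indeed, by CAR, any two creators anti-commute and any two annihilators anti-commute regardless of indices, while a creator--annihilator pair satisfies $[a_p^\dagger, a_q]_+ = \delta_{p,q}$. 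In our case every creator carries an index in $v_{\rm virt}$ and every annihilator an index in $v_{\rm occ}$, and since $v_{\rm occ}\cap v_{\rm virt}=\emptyset$, the $\delta$-symbol vanishes, so creator--annihilator pairs coming from different (and in fact from the same) $X$'s anti-commute as well.

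Next I would invoke the elementary counting lemma: if $A=\alpha_1\cdots\alpha_n$ and $B=\beta_1\cdots\beta_m$ are operators with $\alpha_r\beta_s = -\beta_s\alpha_r$ for all $r,s$, then $AB = (-1)^{nm}BA$. The proof is a one-line induction, moving each $\alpha_r$ across the whole block $B$, which costs $(-1)^m$; doing this $n$ times gives $(-1)^{nm}$. Applied to $A=X_\mu$ and $B=X_\nu$ with $n=2k$ and $m=2\ell$, the sign is $(-1)^{4k\ell}=+1$, hence $X_\mu X_\nu = X_\nu X_\mu$ and the commutator vanishes.

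There is no real obstacle: the only subtlety is making sure that the anti-commutation claim for creator--annihilator pairs really uses the disjointness $v_{\rm occ}\cap v_{\rm virt}=\emptyset$ rather than a general CAR, and that the parity of $2k$ and $2\ell$ (both even, because each excitation matrix carries the same number of creators as annihilators) is what forces the overall sign to be $+1$. Note that if some index is shared between $\mu$ and $\nu$ the rearrangement could produce a zero operator via nilpotency, but this does not affect the identity $X_\mu X_\nu = X_\nu X_\mu$ since $0=0$.
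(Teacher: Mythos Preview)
Your proof is correct and uses the same ingredients as the paper's: the CAR together with the disjointness $v_{\rm occ}\cap v_{\rm virt}=\emptyset$, which forces every elementary factor of $X_\mu$ to anti-commute with every elementary factor of $X_\nu$. The difference is purely in packaging. The paper performs the rearrangement in two explicit stages --- first pushing all creators to the left in each product (picking up $(-1)^{k\ell}$), then permuting the creator and annihilator blocks of $X_\nu X_\mu$ into the order of $X_\mu X_\nu$ (picking up another $(-1)^{2k\ell}$) --- and then cancels. You instead isolate the underlying combinatorial fact as a standalone lemma ($AB=(-1)^{nm}BA$ whenever all elementary pairs anti-commute) and apply it once with $n=2k$, $m=2\ell$ even. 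Your version is shorter and makes transparent why the result holds: excitation operators are \emph{even} monomials in a family of pairwise anti-commuting elementary operators, so they commute. The paper's explicit bookkeeping, on the other hand, has pedagogical value in that it shows the mechanism at the level of individual swaps.
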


\begin{proof}
Let 
$$
X_\mu = X_{a_1,...,a_k \choose i_1,...,i_k}
=
a_{a_k}^\dagger ... a_{a_1}^\dagger a_{i_1} ... a_{i_k}
\quad \text{and} \quad
X_\nu = X_{b_1,...,b_\ell \choose j_1,...,j_\ell}
=
a_{b_\ell}^\dagger ... a_{b_1}^\dagger a_{j_1} ... a_{j_\ell}.
$$
The proof is conducted in two steps:\\
\indent
First, we seek to permute all creation operators in the commutator to the left using the CAR.
We begin with the following product and note that when permuting $a_{b_\ell}^\dagger$ to the right of $a_{a_1}^\dagger$ we merely pick up a sign, since $b_\ell \notin v_{\rm occ}$, i.e., 
$$
\wick{
a_{a_k}^\dagger ... a_{a_1}^\dagger \c1 a_{i_1} ... a_{i_k}
\c1 a_{b_\ell}^\dagger ... a_{b_1}^\dagger a_{j_1} ... a_{j_\ell}
}
=
(-1)^{k}
a_{a_k}^\dagger ... a_{a_1}^\dagger a_{b_{\ell}}^\dagger a_{i_1} ... a_{i_k}
a_{b_{\ell-1}}^\dagger ... a_{b_1}^\dagger a_{j_1} ... a_{j_\ell}.
$$
This furthermore yields
$$
a_{a_k}^\dagger ... a_{a_1}^\dagger  a_{i_1} ... a_{i_k}
 a_{b_\ell}^\dagger ... a_{b_1}^\dagger a_{j_1} ... a_{j_\ell}
=
(-1)^{\ell \cdot k}
a_{a_k}^\dagger ... a_{a_1}^\dagger a_{b_{\ell}}^\dagger ... a_{b_1}^\dagger a_{i_1} ... a_{i_k}
a_{j_1} ... a_{j_\ell}
$$
and similar
$$
a_{b_\ell}^\dagger ... a_{b_1}^\dagger 
a_{j_1} ... a_{j_\ell}
a_{a_k}^\dagger ... a_{a_1}^\dagger  
a_{i_1} ... a_{i_k}
=
(-1)^{\ell \cdot k}
a_{b_\ell}^\dagger ... a_{b_1}^\dagger 
a_{a_1}^\dagger ... a_{a_k}^\dagger
a_{j_1} ... a_{j_\ell} 
a_{i_1} ... a_{i_k}.
$$

Second, we wish to unify the index sequence of the creation and annihilation operators in the two summands of the commutator. 
Applying the CAR again, we find
$$
\wick{
\c1 a_{b_\ell}^\dagger ... a_{b_1}^\dagger  \c1 a_{a_k}^\dagger ... a_{a_1}^\dagger
a_{j_1} ... a_{j_\ell} 
a_{i_1} ... a_{i_k}
}
=
(-1)^{\ell }
a_{a_k}^\dagger a_{b_\ell}^\dagger ... a_{b_1}^\dagger   a_{a_{k-1}}^\dagger ... a_{a_1}^\dagger
a_{j_1} ... a_{j_\ell} 
a_{i_1} ... a_{i_k},
$$
which yields
$$
a_{b_\ell}^\dagger ... a_{b_1}^\dagger  a_{a_k}^\dagger ... a_{a_1}^\dagger
a_{j_1} ... a_{j_\ell} 
a_{i_1} ... a_{i_k}
=
(-1)^{2*\ell \cdot k}
a_{a_k}^\dagger ... a_{a_1}^\dagger
a_{b_\ell}^\dagger ... a_{b_1}^\dagger    
a_{i_1} ... a_{i_k}
a_{j_1} ... a_{j_\ell}.
$$
Note that we have here assumed that $\mu \cap \nu = \emptyset$, otherwise the expression is trivially zero due to the nilpotency of the creation and annihilation operators. 
Overall this yields
\begin{equation}
\begin{aligned}
[X_\mu,& X_\nu] 
= 
[X_{a_1,...,a_k \choose i_1,...,i_k}, X_{b_1,...,b_\ell \choose j_1,...,j_\ell} ]\\
&=
a_{a_k}^\dagger ... a_{a_1}^\dagger 
a_{i_1} ... a_{i_k}
a_{b_\ell}^\dagger ... a_{b_1}^\dagger 
a_{j_1} ... a_{j_\ell}
-
a_{b_\ell}^\dagger ... a_{b_1}^\dagger 
a_{j_1} ... a_{j_\ell}
a_{a_k}^\dagger ... a_{a_1}^\dagger  
a_{i_1} ... a_{i_\ell}\\
&=
(-1)^{\ell \cdot k}
a_{a_k}^\dagger ... a_{a_1}^\dagger a_{b_{\ell}}^\dagger ... a_{b_1}^\dagger 
a_{i_1} ... a_{i_k}
a_{j_1} ... a_{j_\ell}
- 
(-1)^{\ell \cdot k}
a_{b_\ell}^\dagger ... a_{b_1}^\dagger 
a_{a_k}^\dagger ... a_{a_1}^\dagger
a_{j_1} ... a_{j_\ell} 
a_{i_1} ... a_{i_k}\\
&=
(-1)^{\ell \cdot k}
a_{a_k}^\dagger ... a_{a_1}^\dagger a_{b_{\ell}}^\dagger ... a_{b_1}^\dagger 
a_{i_1} ... a_{i_k}
a_{j_1} ... a_{j_\ell}
- 
(-1)^{3\cdot \ell \cdot k}
a_{a_k}^\dagger ... a_{a_1}^\dagger
a_{b_\ell}^\dagger ... a_{b_1}^\dagger    
a_{i_1} ... a_{i_k}
a_{j_1} ... a_{j_\ell}\\
&=0.
\end{aligned}
\end{equation}
\end{proof}
Another important property is that the excitation matrices inherited the nilpotency from the fermionic creation and annihilation matrices.
\begin{proposition}
Let $X_\mu \in \mathfrak{E}(\mathcal{H}^{(N)})$. Then $X_\mu^2 = 0$.
\end{proposition}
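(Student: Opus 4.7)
The plan is to mirror the style of the preceding proposition by first unpacking $X_\mu^2$ in terms of the underlying creation and annihilation operators, then using the CAR together with the disjointness of $v_{\rm occ}$ and $v_{\rm virt}$ to rearrange the product so that a creation operator ends up squared, at which point nilpotency of the $a_p^\dagger$'s kills the whole expression.

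Concretely, writing
$$
X_\mu^2
=
a_{a_k}^\dagger \cdots a_{a_1}^\dagger\,
a_{i_1} \cdots a_{i_k}\,
a_{a_k}^\dagger \cdots a_{a_1}^\dagger\,
a_{i_1} \cdots a_{i_k},
$$
I would first move the middle block of creation operators $a_{a_k}^\dagger \cdots a_{a_1}^\dagger$ to the left past the block of annihilation operators $a_{i_1}\cdots a_{i_k}$. Because $a_j \in v_{\rm virt}$ and $i_j \in v_{\rm occ}$ are always distinct, the CAR relation $[a_{i_j}, a_{a_m}^\dagger]_+=\delta_{i_j, a_m}=0$ means each such transposition is a clean anticommutation, producing an overall sign of $(-1)^{k^2}$ and leaving
$$
X_\mu^2
=
(-1)^{k^2}\,
a_{a_k}^\dagger \cdots a_{a_1}^\dagger\,
a_{a_k}^\dagger \cdots a_{a_1}^\dagger\,
a_{i_1} \cdots a_{i_k}\,
a_{i_1} \cdots a_{i_k}.
$$

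Now I would focus on the first block of $2k$ creation operators. Since any two distinct $a_p^\dagger$'s anticommute, I can rearrange this product (picking up some sign) so that two copies of $a_{a_1}^\dagger$ stand next to each other; by nilpotency $(a_{a_1}^\dagger)^2=0$ the whole block, and hence $X_\mu^2$, vanishes. Equivalently, and even more cheaply, one can invoke the already-proven Proposition~\ref{ExcitationMatCommutieren} with $\nu=\mu$ to reduce the reordering bookkeeping, and then appeal to $(a_{a_1}^\dagger)^2=0$.

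I do not expect any real obstacle here: the statement is essentially the operator form of the Pauli exclusion principle (applying $X_\mu$ twice would require re-creating a particle in an already-occupied virtual orbital), and the only thing that could go wrong is a miscounted sign, which is immaterial since the final factor is zero. If I wanted a more conceptual one-line proof, I would note that for any basis ket $|s_1,\ldots,s_{N_B}\rangle$ the vector $X_\mu|s_1,\ldots,s_{N_B}\rangle$, when nonzero, has $s_{a_j}=1$ for every $j$, so a second application of $X_\mu$ tries to create a particle in an already-occupied orbital and must return $0$; since this holds on a basis of $\mathcal{F}$, the operator identity $X_\mu^2=0$ follows.
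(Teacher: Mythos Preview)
Your proposal is correct and follows essentially the same route as the paper: write out $X_\mu^2$, use the CAR together with the disjointness of $v_{\rm occ}$ and $v_{\rm virt}$ to bring two copies of the same creation operator next to each other, and conclude from $(a_p^\dagger)^2=0$. The only cosmetic difference is that the paper moves a single $a_{a_k}^\dagger$ leftward to meet its twin, whereas you move the whole middle block of creation operators first and then pair up $a_{a_1}^\dagger$; your alternative basis-vector argument is a nice extra but not used in the paper.
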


\begin{proof}
Recall that $(a_p^\dagger)^2 = (a_p)^2 = 0$ by construction (see Eq.~\eqref{eq:CreationAndAnnihilation}).
Let 
$$
X_\mu = X_{a_1,...,a_k \choose i_1,...,i_k}
=
a_{a_k}^\dagger ... a_{a_1}^\dagger a_{i_k} ... a_{i_1}.
$$
Then
\begin{equation}
\begin{aligned}
X_\mu^2
&= 
\wick{
\c1 a_{a_k}^\dagger ... a_{a_1}^\dagger a_{i_k} ... a_{i_1} \c1 a_{a_k}^\dagger ... a_{a_1}^\dagger a_{i_k} ... a_{i_1}
}\\
&=
- \underbrace{a_{a_k}^\dagger a_{a_k}^\dagger}_{=0} a_{a_k-1}^\dagger... a_{a_1}^\dagger a_{i_k} ... a_{i_1}  a_{a_k-1}^\dagger ... a_{a_1}^\dagger a_{i_k} ... a_{i_1}\\
&=
0.
\end{aligned}
\end{equation}
\end{proof}

We are now set to define the vector space of cluster matrices, a fundamental concept in coupled cluster theory, i.e., the $\mathbb{C}$-vector space
\begin{equation}
\mathfrak{b}
=
\left\lbrace  
T = \sum_{\mu} t_\mu X_\mu ~\Big|~ \mu\in\mathcal{I}
\right\rbrace,
\end{equation}
where $\mathcal{I}$ is as denied in Eq.~\eqref{IndexSetI}.
Note that $\mathfrak{b}$ is a linear space, and the excitation matrices are linearly independent by Theorem~\ref{th:Basis<->operator}, hence, each element in $\mathfrak{b}$ is uniquely defined through its linear coefficients $\mathbf{t} = (t_\mu)$. We therefore use the convention that $\mathbf{t}$ describes an amplitude vector whereas $T$ describes the corresponding cluster matrix.\\  
Utilizing the propositions discussed earlier, we will demonstrate that this vector space possesses a highly structured nature. Our next step is to introduce the concept of the exponential of cluster matrices, which forms a key mathematical bridge between cluster matrices and wave operators. This involves drawing a connection between the Lie algebra, as embodied by the cluster matrices, and the Lie group comprising wave operators, thereby establishing an essential theoretical link in our analysis. To begin this exploration, we first assert that $\mathfrak{b}$ constitutes some form of Lie algebra. As it turns out, this assertion holds true.

\begin{theorem}
The space of cluster matrices $\mathfrak{b}$ equipped with the standard matrix commutator $[\cdot , \cdot ]$ forms a nilpotent Abelian Lie algebra. 
\end{theorem}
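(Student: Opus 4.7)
The plan is to verify three things in turn: that $(\mathfrak{b}, [\cdot,\cdot])$ is a Lie algebra, that it is Abelian, and that it is nilpotent. The first two will fall out almost for free from Proposition~\ref{ExcitationMatCommutieren}; the substantive content lies in establishing nilpotency in a form strong enough to be useful in the upcoming discussion of the matrix exponential.

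Since $\mathfrak{b}$ sits inside the matrix algebra on $\mathcal{F}$, the commutator is automatically bilinear, antisymmetric, and satisfies the Jacobi identity; only closure remains to check. Writing $T_1 = \sum_\mu t_\mu X_\mu$ and $T_2 = \sum_\nu s_\nu X_\nu$ and invoking bilinearity together with Proposition~\ref{ExcitationMatCommutieren} yields
\[
[T_1, T_2] \;=\; \sum_{\mu,\nu} t_\mu s_\nu [X_\mu, X_\nu] \;=\; 0,
\]
which simultaneously gives closure (since $0 \in \mathfrak{b}$) and the Abelian property. In particular $[\mathfrak{b},\mathfrak{b}] = 0$, so the lower central series collapses at the first step and $\mathfrak{b}$ is nilpotent in the Lie-algebra sense.

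For the stronger, and practically more useful, assertion that every $T \in \mathfrak{b}$ is itself a nilpotent matrix, I would expand $T^k = \sum_{\mu_1,\ldots,\mu_k} t_{\mu_1}\cdots t_{\mu_k}\, X_{\mu_1}\cdots X_{\mu_k}$ and argue via the CAR, exactly as in the proof of Proposition~\ref{ExcitationMatCommutieren}, that each nonzero product $X_{\mu_1}\cdots X_{\mu_k}$ equals (up to a sign) an excitation operator of rank $|\mu_1|+\cdots+|\mu_k|$. Since every $|\mu_i|\geq 1$ while the maximum possible excitation rank is $N$ (one cannot annihilate more than $N$ occupied orbitals), any product with $k\geq N+1$ must vanish, hence $T^{N+1}=0$. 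The main obstacle here is the bookkeeping: one must check that a product either collapses to zero whenever a creation or annihilation index is repeated — via $(a_p^\dagger)^2 = a_p^2 = 0$ — or rearranges through the CAR into a single composite excitation operator whose rank is the sum of the individual ranks. This is the same sign-tracking exercise that underpins Proposition~\ref{ExcitationMatCommutieren}, so once set up carefully, the bound on the maximum excitation level by $N$ immediately produces the nilpotency index $N+1$.
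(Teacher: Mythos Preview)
Your proof is correct and follows essentially the same line as the paper's: both use Proposition~\ref{ExcitationMatCommutieren} and bilinearity to obtain $[T_1,T_2]=0$ (hence the Abelian Lie-algebra structure), and both establish $T^{N+1}=0$ via a pigeonhole argument on the $N$ occupied indices. The paper organizes the latter through the multinomial expansion of $T^{N+1}$ and observes that some $a_i$ with $i\in v_{\rm occ}$ must appear twice, whereas you track the total excitation rank of a nonvanishing product $X_{\mu_1}\cdots X_{\mu_k}$ and bound it by $N$; these are the same argument in slightly different dress. Your explicit separation of Lie-algebraic nilpotency (automatic once Abelian, since the lower central series collapses at the first step) from the associative nilpotency of each element (the stronger statement actually needed for the exponential map) is a worthwhile clarification that the paper leaves implicit.
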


\begin{proof}
To show that $\mathfrak{b}$ is a Lie algebra, we need to prove that it (i) is a linear space (which is true by construction), and (ii) that there exists an alternating bilinear map (in this case the standard matrix commutator $[\cdot , \cdot ]$) that satisfies the Jacobi identity, i.e., 
$$
[X,[Y,Z]]+[Y,[Z,X]]+[Z,[X,Y]] = 0 \qquad \forall~X,Y,Z\in \mathfrak{b}.
$$
In order to show (ii), we combine Proposition~\ref{ExcitationMatCommutieren} and the bi-linearity of the matrix commutator. This yields that for two cluster matrices $T_1, T_2\in\mathfrak{b}$
$$
[T_1, T_2] 
=\sum_mu \sum_\nu t_\mu t_\nu [X_\mu, X_\nu]
=\sum_mu \sum_\nu t_\mu t_\nu [X_\nu, X_\mu]
= [T_2, T_1],
$$
hence, cluster matrices commute. 
Therefore, the Jacobi identity is trivially fulfilled. 
This shows that $\mathfrak{b}$ equipped with the regular matrix commutator is an abelian Lie algebra, where the term ``abelian'' simply means that the elements in $\mathfrak{b}$ commute with each other.\\
Next, we shall show the nilpotency. 
To that end, we expand $T^{N+1}$ which yields
\begin{equation}
T^{N+1}
=
\sum_{
\substack{
k_{1}+k_{2}+\cdots +k_{m}=N+1\\ 
k_{1},k_{2},\cdots ,k_{m}\geq 0}}
{N+1 \choose k_{1},k_{2},\ldots ,k_{m}}\prod _{j=1}^{m}(t_{\mu_j} X_{\mu_j})^{k_j},
\end{equation}
where 
$$
{N+1 \choose k_{1},k_{2},\ldots ,k_{m}}
={\frac {N+1!}{k_{1}!\,k_{2}!\cdots k_{m}!}}
$$
is a multinomial coefficient.
Since $|v_{\rm occ}| = N$, there exists one $i\in v_{\rm occ}$ that appears at least twice in each matrix $\prod _{j=1}^{m}(t_{\mu_j} X_{\mu_j})^{k_j}$. 
However, since $a_i^2 = 0 $ this yields that $T^{N+1}=0$, which shows the claim.
\end{proof}

The above Theorem ensures that the (Lie) exponential of $\mathfrak{b}$ is a Lie Group, i.e., a differentiable manifold. However, we seek that this is the Lie Group of wave operators that we used to define any intermediately normalized wave function in $\mathcal{H}^{(N)}$, see Eq.~\eqref{eq:waveoperator}. 
To that end, we begin by showing that every intermediately normalized wave function can be expressed through a linear wave operator. 
As mentioned earlier, the construction of excitation operators allows us to transfer the degrees of freedom from the basis functions in Eq.~\eqref{eq:LinearExpansion} to wave operators. 
Formally, this yields the definition of the (linear) wave operator map $\Omega$ as
\begin{equation}
\Omega : \mathfrak{b} \to \mathcal{G}~;~ C\mapsto I + C,
\end{equation}
where 
\begin{equation}
\mathcal{G} = \{ I + C~|~C\in\mathfrak{b} \}.
\end{equation}
Note that in this formulation, the wave operator map $\Omega$ takes a cluster matrix as input and yields a wave operator, i.e., $\Omega(C)$ maps the reference determinant to some wave function in $\mathcal{H}^{(N)}$.
By construction, $\mathcal{G}$ is an affine linear space of matrices. 
We will now show the one-to-one correspondence between intermediately normalized functions 
\begin{equation}
|\Psi\rangle \in \mathcal{H}_{\rm int} 
= 
\left\lbrace
|\Psi\rangle \in \mathcal{H}^{(N)}~|~\langle \Psi | \Psi_0 \rangle = 1
\right\rbrace
\subset \mathcal{H}^{(N)},    
\end{equation}
and cluster matrices $C\in\mathfrak{b}$. 
We begin with the linear parametrization of elements in $|\Psi\rangle \in \mathcal{H}_{\rm int}$.

\begin{lemma}
\label{lemma:1}
Let $|\Psi\rangle \in\mathcal{H}_{\rm int}$. There exists a unique element $(I +C)\in\mathcal{G}$, s.t., 
\begin{equation}
| \Psi \rangle = (I +C)| \Psi_0\rangle.
\end{equation}
\end{lemma}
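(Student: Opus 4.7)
The plan is to read off existence and uniqueness of $C$ directly from Theorem~\ref{th:Basis<->operator}, using the intermediate-normalization hypothesis only to pin down the coefficient of the reference. By that theorem, every basis vector of $\mathcal{H}^{(N)}$ is (up to a sign) either $|\Psi_0\rangle$ itself or $X_\mu|\Psi_0\rangle$ for a unique $\mu\in\mathcal{I}$, so for any $|\Psi\rangle\in\mathcal{H}^{(N)}$ I would write the unique expansion
\[
|\Psi\rangle = c_0 |\Psi_0\rangle + \sum_{\mu\in\mathcal{I}} c_\mu X_\mu|\Psi_0\rangle.
\]

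Next, I would pair this expansion against $|\Psi_0\rangle$. Because each $X_\mu|\Psi_0\rangle$ with $\mu\in\mathcal{I}$ is (up to sign) a Slater determinant distinct from $|\Psi_0\rangle$, the orthogonality of distinct Slater determinants gives $\langle \Psi_0 | X_\mu | \Psi_0\rangle = 0$ for every $\mu\in\mathcal{I}$. The hypothesis $\langle \Psi_0|\Psi\rangle = 1$ then forces $c_0 = 1$. Setting $C := \sum_{\mu\in\mathcal{I}} c_\mu X_\mu$, the definition of the cluster-matrix space yields $C\in\mathfrak{b}$, hence $I + C\in\mathcal{G}$ and $(I+C)|\Psi_0\rangle = |\Psi\rangle$, establishing existence.

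For uniqueness, I would suppose $(I+C_1)|\Psi_0\rangle = (I+C_2)|\Psi_0\rangle$ with $C_1,C_2\in\mathfrak{b}$, so that $(C_1 - C_2)|\Psi_0\rangle = 0$. Writing $C_1 - C_2 = \sum_\mu (t_\mu^{(1)} - t_\mu^{(2)}) X_\mu$ and invoking the linear independence of $\{X_\mu|\Psi_0\rangle\}_{\mu\in\mathcal{I}}$ (again from Theorem~\ref{th:Basis<->operator}) forces $t_\mu^{(1)} = t_\mu^{(2)}$ for all $\mu$, so $C_1 = C_2$. I do not anticipate any real obstacle here: the lemma is essentially a dictionary translation of the basis-operator correspondence, with intermediate normalization precisely selecting the reference coefficient to be one. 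The only point requiring care is the sign bookkeeping inherited from Theorem~\ref{th:Basis<->operator}, which is harmless because signs can be absorbed into the coefficients $c_\mu$.
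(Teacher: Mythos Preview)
Your proposal is correct and follows essentially the same route as the paper: both arguments derive existence from the basis--operator correspondence of Theorem~\ref{th:Basis<->operator} (the paper packages this as $\mathcal{H}_{\rm int} = (I+\mathfrak{b})|\Psi_0\rangle$, while you spell out the expansion and use $\langle\Psi_0|\Psi\rangle=1$ to fix $c_0=1$), and both deduce uniqueness by projecting onto the excited determinants, which is exactly the linear-independence argument you give.
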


\begin{proof}
We first observe that $\mathcal{H}_{\rm int} \subset \mathcal{H}^{(N)}$ can be characterized by 
\begin{equation}
\mathcal{H}_{\rm int} 
= |\Psi_0\rangle + {\rm span}(\{ |\Psi_\mu\rangle \}_\mu)
\underset{(*)}{=} |\Psi_0\rangle + {\rm span}(\{ X_\mu  \}_\mu)|\Psi_0\rangle
= (I + \mathfrak{b})|\Psi_0\rangle,
\end{equation}
where the equality $(*)$ is a consequence of Theorem~\ref{th:Basis<->operator}.
This shows that every element in $|\Psi\rangle \in \mathcal{H}_{\rm int} $ can be expressed as 
\begin{equation}
|\Psi\rangle = (I +C)| \Psi_0\rangle.
\end{equation}
Next, assume there exist two cluster matrices $C_1, C_2 \in \mathfrak{b}$ s.t. 
\begin{equation}
(I +C_1)| \Psi_0\rangle = |\Psi\rangle = (I +C_2)| \Psi_0\rangle.
\end{equation}
However, this yields
\begin{equation}
[c_1]_\mu 
= \langle \Psi_\mu| (I +C_1)| \Psi_0\rangle
= \langle \Psi_\mu| \Psi\rangle
= \langle \Psi_\mu| (I +C_2)| \Psi_0\rangle
= [c_2]_\mu\qquad \forall \mu \in \mathcal{I}
\end{equation}
implying that $C_1 = C_2$, which shows the claim. 
\end{proof}

\begin{lemma}
\label{lemma:2}
The wave operator map $\Omega$ is bijective. 
\end{lemma}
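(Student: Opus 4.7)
The plan is to verify surjectivity and injectivity of $\Omega$ separately; both should follow almost immediately from the definitions together with Lemma~\ref{lemma:1} and Theorem~\ref{th:Basis<->operator}. No clever estimates or constructions are needed — the content of the lemma is essentially bookkeeping that repackages the uniqueness of the amplitude vector as a statement about an abstract map between the Lie algebra $\mathfrak{b}$ and the affine set $\mathcal{G}$.

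For surjectivity, I would just invoke the definition $\mathcal{G} = \{I + C \mid C \in \mathfrak{b}\}$: every $W \in \mathcal{G}$ is of this form by construction, hence $W = \Omega(C)$. For injectivity, suppose $\Omega(C_1) = \Omega(C_2)$ for $C_1, C_2 \in \mathfrak{b}$. Then $I + C_1 = I + C_2$ as matrices in the ambient algebra, and subtracting the identity yields $C_1 = C_2$. To conclude that this matrix equality forces equality of the amplitude vectors $\mathbf{t}_1, \mathbf{t}_2$, I would appeal to the linear independence of the family $\{X_\mu\}_{\mu \in \mathcal{I}}$, which is a direct consequence of Theorem~\ref{th:Basis<->operator} (since each $X_\mu$ maps $|\Psi_0\rangle$ to a distinct basis vector of $\mathfrak{B}^{(N)}$ up to sign). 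This guarantees that each $T \in \mathfrak{b}$ has unique coordinates in the basis $\{X_\mu\}$.

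I do not anticipate a real obstacle; the only substantive ingredient is the linear independence of the excitation matrices, which is already in hand. As a conceptually cleaner alternative, I could factor $\Omega$ through the evaluation map $\mathcal{G} \ni W \mapsto W|\Psi_0\rangle \in \mathcal{H}_{\rm int}$: Lemma~\ref{lemma:1} states exactly that the composition $C \mapsto (I+C)|\Psi_0\rangle$ is a bijection $\mathfrak{b} \to \mathcal{H}_{\rm int}$, and the evaluation map is itself bijective by the projection argument used in the proof of Lemma~\ref{lemma:1}. Bijectivity of $\Omega$ then follows by cancellation of the second factor.
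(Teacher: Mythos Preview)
Your proposal is correct. Surjectivity is handled exactly as in the paper (by definition of $\mathcal{G}$ as the range of $\Omega$). For injectivity the paper instead invokes a dimension count, observing that $\dim(\mathfrak{b}) = \dim(\mathcal{G})$ and concluding bijectivity from surjectivity; your direct subtraction $I + C_1 = I + C_2 \Rightarrow C_1 = C_2$ is an equally valid and arguably more elementary route. Note that once you have $C_1 = C_2$ as matrices you are already done, since the domain of $\Omega$ is $\mathfrak{b}$ itself rather than the space of amplitude vectors --- the appeal to linear independence of the $X_\mu$ is not needed at this point (though it is of course what underlies the identification of $\mathfrak{b}$ with amplitude space elsewhere).
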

\begin{proof}
First note that $ \mathcal{G}$ was defined by the range of $\Omega$. Hence, the wave operator map is trivially subjective. 
Second, note that ${\rm dim}(\mathfrak{b})= {\rm dim}(\mathcal{G})$, which yields that $\Omega$ is a bijection.
\end{proof}
Combining these two lemmata, yields the desired one-to-one correspondence between $\mathfrak{b}$ and elements in $\mathcal{H}_{\rm int}$.
\begin{theorem}
\label{th:fci}
Let $|\Psi\rangle \in\mathcal{H}_{\rm int}$. There exists a unique element $C\in\mathfrak{b}$, s.t., 
\begin{equation}
|\Psi\rangle = \Omega(C)|\Phi_0\rangle.
\end{equation}
\end{theorem}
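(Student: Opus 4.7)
The plan is to prove Theorem~\ref{th:fci} by directly combining Lemma~\ref{lemma:1} and Lemma~\ref{lemma:2}; once those two results are in hand, the theorem is essentially a repackaging, so very little new work is needed.

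First, I would start from an arbitrary intermediately normalized state $|\Psi\rangle \in \mathcal{H}_{\rm int}$ and invoke Lemma~\ref{lemma:1} to extract a unique element $I + C \in \mathcal{G}$ with $|\Psi\rangle = (I+C)|\Psi_0\rangle$. This already gives existence and uniqueness of a representative inside the affine linear space $\mathcal{G}$ of wave operators.

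Next, I would pull this representative back to the cluster matrix side via the wave operator map. By the definition $\Omega(C) = I + C$, we can immediately rewrite $|\Psi\rangle = \Omega(C)|\Psi_0\rangle$, establishing existence of $C \in \mathfrak{b}$ with the desired property. For uniqueness, suppose $C_1, C_2 \in \mathfrak{b}$ both satisfy $|\Psi\rangle = \Omega(C_i)|\Psi_0\rangle$. Then $(I+C_1)|\Psi_0\rangle = (I+C_2)|\Psi_0\rangle$, and applying the uniqueness clause of Lemma~\ref{lemma:1} (or equivalently, projecting onto each $\langle \Psi_\mu|$ as in its proof) yields $I+C_1 = I+C_2$ in $\mathcal{G}$. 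Injectivity of $\Omega$, part of Lemma~\ref{lemma:2}, then forces $C_1 = C_2$ in $\mathfrak{b}$.

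There is no real obstacle here: the content of the theorem has been entirely absorbed into the two preceding lemmata, which themselves rest on Theorem~\ref{th:Basis<->operator} (the one-to-one correspondence between excitation matrices and basis determinants). The only point that merits a brief sentence of care is making sure one distinguishes between uniqueness of the wave operator $I+C \in \mathcal{G}$ (Lemma~\ref{lemma:1}) and uniqueness of the cluster matrix $C \in \mathfrak{b}$ (which additionally requires the bijectivity of $\Omega$ from Lemma~\ref{lemma:2}); conflating the two would make the argument look circular, but in fact Lemma~\ref{lemma:2} is a statement purely about the map $\Omega$ and is independent of the parametrization result in Lemma~\ref{lemma:1}.
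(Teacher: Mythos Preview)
Your proposal is correct and follows exactly the approach the paper takes: the paper itself states the theorem immediately after the sentence ``Combining these two lemmata, yields the desired one-to-one correspondence between $\mathfrak{b}$ and elements in $\mathcal{H}_{\rm int}$'' and offers no further argument. If anything, your write-up is more careful than the paper's, since you explicitly separate the uniqueness of $I+C\in\mathcal{G}$ (Lemma~\ref{lemma:1}) from the uniqueness of $C\in\mathfrak{b}$ (injectivity of $\Omega$, Lemma~\ref{lemma:2}).
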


Although we have restricted the above theorem to intermediately normalized wave functions (the reason will become apparent shortly), Theorem~\ref{th:fci} is in fact the core of the (full) configuration interaction expansion~\cite{helgaker2014molecular}. 

We now proceed to the exponential parametrization. 
Note, since $\mathfrak{b}$ is nilpotent the exponential series $\exp(T)$ for any element $T\in\mathfrak{b}$ is not a {\it true} exponential as it terminates after at most $N$ terms. Hence, it is a polynomial at most of the degree $N$.
We therefore do not need to investigate the convergence of the exponential series and can define the set
\begin{equation}
\tilde{\mathcal{G}}
=
\left\lbrace
\exp(T)=I + \sum_{n=1}^N\frac{1}{n!}T^n~|~T \in \mathfrak{b}
\right\rbrace.
\end{equation}

\begin{lemma}
\label{lemma:5}
The set $\tilde{\mathcal{G}} $ is equal to $ \mathcal{G}$.
\end{lemma}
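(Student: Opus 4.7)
The plan is to establish both inclusions separately, with the key technical input being that $\mathfrak{b}$ is not merely a Lie algebra but is in fact closed under ordinary matrix multiplication. To see this, take two excitation matrices $X_\mu, X_\nu \in \mathfrak{E}(\mathcal{H}^{(N)})$ and apply the CAR (as in the proof of Proposition~\ref{ExcitationMatCommutieren}) to permute all creation operators in $X_\mu X_\nu$ to the left of all annihilation operators. If the occupied or virtual labels of $\mu$ and $\nu$ overlap, the product vanishes by nilpotency of the underlying fermionic operators; otherwise the result equals $\pm X_{\mu \cup \nu}$, which still lies in $\mathfrak{E}(\mathcal{H}^{(N)})$ provided $|\mu|+|\nu| \leq N$ (and if $|\mu|+|\nu| > N$, an occupied index must repeat, again forcing the product to vanish). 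By bilinearity, $T_1 T_2 \in \mathfrak{b}$ for all $T_1, T_2 \in \mathfrak{b}$, and in particular $T^n \in \mathfrak{b}$ for every $T \in \mathfrak{b}$ and every $n \geq 1$.

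The inclusion $\tilde{\mathcal{G}} \subseteq \mathcal{G}$ is then immediate: for any $T \in \mathfrak{b}$, the truncated series $\exp(T) - I = \sum_{n=1}^{N} \frac{1}{n!} T^n$ is a linear combination of elements of $\mathfrak{b}$ and so belongs to $\mathfrak{b}$, which gives $\exp(T) \in I + \mathfrak{b} = \mathcal{G}$.

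For the reverse inclusion $\mathcal{G} \subseteq \tilde{\mathcal{G}}$, given $I + C \in \mathcal{G}$ I would define
\begin{equation}
T := \sum_{n=1}^{N} \frac{(-1)^{n+1}}{n} C^n,
\end{equation}
which by the closure argument lies in $\mathfrak{b}$, and then verify that $\exp(T) = I + C$. Since $C$ is nilpotent with $C^{N+1}=0$ and commutes with itself, every multiplication involved reduces to an identity in the truncated polynomial ring $\mathbb{C}[x]/(x^{N+1})$, where the formal power series identity $\exp(\log(1+x)) = 1+x$ holds as a genuine polynomial identity after truncation at degree $N$. Substituting $x = C$ then yields $\exp(T) = I + C$, finishing the proof.

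The main obstacle is the closure lemma $T_1 T_2 \in \mathfrak{b}$: although intuitively clear from the CAR manipulations, a clean presentation requires tracking the overlap cases and verifying that the combined multi-index $\mu \cup \nu$ remains admissible (size $\leq N$) in exactly the nonvanishing cases. Once that is in place, the exponential and logarithm simply furnish mutually inverse bijections between $\mathfrak{b}$ and $I+\mathfrak{b}=\mathcal{G}$ via the standard nilpotent exp--log correspondence, and the lemma follows with essentially no further work.
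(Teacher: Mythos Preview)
Your proof is correct and follows essentially the same route as the paper: establish both inclusions by using nilpotency to truncate the $\exp$ and $\log$ series to finite sums landing in $\mathfrak{b}$, and then invoke the formal identity $\exp(\log(I+C))=I+C$. The only difference is that you explicitly justify the multiplicative closure $T^n,\,C^n\in\mathfrak{b}$ via the CAR, whereas the paper simply asserts $P(T)\in\mathfrak{b}$ ``since $\mathfrak{b}$ is a vector space''; your extra care here is warranted, but the overall architecture is identical.
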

\begin{proof}
Let $\exp(T) \in \tilde{\mathcal{G}}$ with $T \in \mathfrak{b}$. By definition
$
\exp(T) = I + P(T)
$
where $P$ is a polynomial at most of the degree $N$ and since $\mathfrak{b}$ is a vector space, we have $P(T)\in \mathfrak{b}$. However, this defines an element in $\mathcal{G}$ which yields $\tilde{\mathcal{G}} \subseteq \mathcal{G}$. 
Conversely, let $I + C \in \mathcal{G}$. Then $I + C - I = C \in \mathfrak{b}$, which implies that 
\begin{equation}
\log (I + C) = \sum_{n=0}^\infty \frac{(-1)^n}{n+1} C^{n+1} 
\end{equation}
terminates after $N+1$ terms. Hence ${\rm log}(I + C)$ is an element in $\mathfrak{b}$ and therewith 
$$
I + C
=
\exp({\rm log}(I + C)) \in \tilde{\mathcal{G}}
$$
which shows that $\mathcal{G}\subseteq \tilde{\mathcal{G}} $.\\
\end{proof}

The common algebraic definition of the Lie exponential map is by means of a map ${\exp : \mathfrak{b} \to \mathcal{G}}$, where $\mathcal{G}$ is a Lie group and $\mathfrak{b}$ the corresponding Lie algebra. 
In particular, the exponential map is a map from the tangent space to the Lie group~\cite{hall2013lie,kirillov2008introduction}. 
We wish to equip $\mathcal{G}$ with a particular group multiplication $\odot$, such that $(\mathcal{G},\odot)$ is a Lie group and $\mathfrak{b}$ its Lie algebra. 
This group multiplication $\odot$ is defined by means of the Backer--Campbell--Hausdorff formula 
$$
\odot: \mathcal{G} \times \mathcal{G}  \rightarrow \mathcal{G};\ \exp(T) \odot \exp(U) = \exp(T * U)
$$ 
for an operation $*$ on $\mathfrak{b}$ which takes the following simple form on Abelian algebras 
\begin{equation}
*: \mathfrak{b} \times \mathfrak{b} \to \mathfrak{b}~;~
(T,S) \mapsto T + S.
\end{equation}
In other words, we can almost trivially derive the coupled cluster ansatz using concepts from non-linear algebra~\cite{michalek2021invitation}.  

\begin{theorem}
\label{Th:ExpMapSurj}
Given the Lie group $\mathcal{G}$ with Lie algebra $\mathfrak{b}$. The exponential map $\exp :\mathfrak{b} \to \mathcal{G}$ is surjective. 
\end{theorem}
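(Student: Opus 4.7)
The plan is to observe that surjectivity of $\exp:\mathfrak{b}\to\mathcal{G}$ is essentially already established by Lemma~\ref{lemma:5}, and therefore the proof reduces to unpacking a definition. By construction, the image of the exponential map is exactly the set $\tilde{\mathcal{G}} = \{\exp(T) \mid T\in\mathfrak{b}\}$, so surjectivity amounts to showing $\tilde{\mathcal{G}} = \mathcal{G}$, which is precisely the content of Lemma~\ref{lemma:5}.

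Concretely, I would first rewrite the claim as: for every $g\in\mathcal{G}$ there exists $T\in\mathfrak{b}$ with $\exp(T)=g$. Fixing $g = I + C \in \mathcal{G}$ with $C\in\mathfrak{b}$, I would then invoke the nontrivial direction of Lemma~\ref{lemma:5}, which produces such a $T$ explicitly via the (terminating) logarithm series $T = \log(I+C) = \sum_{n=0}^{N}\frac{(-1)^n}{n+1} C^{n+1}\in\mathfrak{b}$ satisfying $\exp(T) = I+C = g$. The fact that this series terminates and lands in $\mathfrak{b}$ is the only real content, and it was already handled in Lemma~\ref{lemma:5} using the nilpotency $C^{N+1} = 0$ and the vector-space structure of $\mathfrak{b}$.

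The main obstacle was therefore not in the statement of Theorem~\ref{Th:ExpMapSurj} itself, but in the preparatory Lemma~\ref{lemma:5}: one needs the nilpotency of the cluster algebra both to make sense of $\exp(T)$ without convergence issues and to ensure $\log(I+C)$ truncates to a polynomial in $C$ that lies in $\mathfrak{b}$. Once those two polynomial identities are available, the present theorem is a one-line consequence, and I would write the proof as essentially a citation of Lemma~\ref{lemma:5} together with the identification $\mathrm{Im}(\exp) = \tilde{\mathcal{G}}$.
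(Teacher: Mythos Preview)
Your proposal is correct and matches the paper's intended argument: the paper does not give a separate proof of Theorem~\ref{Th:ExpMapSurj} but relies implicitly on Lemma~\ref{lemma:5}, and the remark following the theorem (``the proof shows that the inverse of the exponential is in this particular case well-defined'') confirms that the logarithm construction you invoke is precisely what the author has in mind. Your identification $\mathrm{Im}(\exp) = \tilde{\mathcal{G}}$ together with the citation of Lemma~\ref{lemma:5} is exactly the right way to write this down.
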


Note that this theorem can be generalized to any nilpotent Lie algebra. 
However, the proof shows that the inverse of the exponential is in this particular case well-defined, which proves the following theorem. 

\begin{theorem}
\label{th:ExpMapInv}
The exponential map from $\mathfrak{b}$ to $\mathcal{G}$ is bijective. 
\end{theorem}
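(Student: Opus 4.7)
The plan is to exhibit an explicit two-sided inverse of $\exp$, thereby upgrading the surjectivity of Theorem~\ref{Th:ExpMapSurj} to bijectivity. The natural candidate is already sitting in the proof of Lemma~\ref{lemma:5}: for any $I+C\in\mathcal{G}$, the series
\begin{equation*}
\log(I + C) \;=\; \sum_{n=0}^{\infty} \frac{(-1)^n}{n+1}\, C^{n+1}
\end{equation*}
terminates after at most $N$ nonzero terms, since $C\in\mathfrak{b}$ satisfies $C^{N+1}=0$ by the nilpotency of $\mathfrak{b}$. In particular $\log(I+C)\in\mathfrak{b}$, so we have a well-defined map $\log:\mathcal{G}\to\mathfrak{b}$. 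It then suffices to verify the two composition identities $\log\circ\exp=\mathrm{id}_{\mathfrak{b}}$ and $\exp\circ\log=\mathrm{id}_{\mathcal{G}}$.

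The key observation that makes both identities routine is that $\exp(T)-I$ and $\log(I+C)$ are both \emph{polynomials} in their respective single arguments, thanks to nilpotency. Hence all matrices arising in either composition lie in the commutative polynomial subalgebra generated by a single matrix, and any formal power series identity in $\mathbb{Q}[\![x]\!]$ that makes sense after truncation at degree $N$ transfers verbatim to this commutative setting. Applying this to the classical identities $\log(\exp(x))=x$ and $\exp(\log(1+x))=1+x$ immediately yields $\log(\exp(T))=T$ and $\exp(\log(I+C))=I+C$.

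The only delicate point — and the likely main obstacle — is justifying the transfer from a formal power series identity to a genuine matrix identity. One clean way is to observe that $\mathbb{C}[T]\cong\mathbb{C}[x]/(x^{N+1})$ as commutative rings, so that any truncated identity in $\mathbb{C}[\![x]\!]$ specializes correctly. A more elementary alternative is to expand $\exp(T)^k$ via the multinomial theorem, collect terms by total degree in $T$, and check directly that the coefficient of $T^m$ in $\log(\exp(T))$ is $\delta_{m,1}$ for $1\le m\le N$; this is a finite rearrangement of finitely many scalars and uses only $T^{N+1}=0$ together with commutativity of powers of $T$.

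As a sanity check, after identifying $\mathfrak{b}$ with $\mathbb{C}^{|\mathcal{I}|}$ via the amplitude vector $\mathbf{t}$ (as in the paper) and $\mathcal{G}$ with the corresponding affine space of coefficient vectors, the map $\exp$ becomes a polynomial self-map of an affine space of equal dimension, with $\log$ a polynomial inverse. This is consistent with the bijection claim and with the general Lie-theoretic fact that the exponential is a diffeomorphism for a simply connected nilpotent Lie group — here rendered completely elementary by the fact that both $\exp$ and $\log$ are explicit polynomials.
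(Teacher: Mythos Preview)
Your proposal is correct and follows essentially the same approach as the paper: the paper's argument for Theorem~\ref{th:ExpMapInv} is simply the remark that the proof of surjectivity (via Lemma~\ref{lemma:5}) already produces a well-defined logarithm, which serves as the inverse. You spell out more carefully than the paper does why the composition identities $\log\circ\exp=\mathrm{id}_{\mathfrak{b}}$ and $\exp\circ\log=\mathrm{id}_{\mathcal{G}}$ actually hold, via the transfer of formal power series identities under nilpotency --- a point the paper leaves implicit.
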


This shows that any wave function that is intermediately normalized can be uniquely expressed through an element in $\mathcal{G}$, i.e., through the exponential of a cluster matrix $T\in\mathfrak{b}$. This aligns with the known functional analytic results~\cite{schneider2009analysis,rohwedder2013continuous,laestadius2018analysis,faulstich2019analysis}, and is known in the quantum-chemistry community as the equivalence of FCI and FCC. 

Some of the above results naturally extend to the truncated case, i.e., using a subspace $\bar{\mathfrak{b}} \subset \mathfrak{b}$ in the above construction. We refer the interested reader to \cite{faulstich2022coupled}.

\subsection{The single reference CC theory}

After the mathematical introduction to the CC ansatz, we now turn to the equations that yield the desired cluster matrix. These equations, central to coupled cluster theory, are the {\it coupled cluster equations}. 
This set of equations can be motivated as follows:
Let $|\tilde{\Psi} \rangle \in \mathcal{H}^{(N)}$ be the ground state solution to the electronic Schr\"odinger equation, i.e., 
\begin{equation}
\label{eq:SE}
H|\tilde{\Psi} \rangle 
= 
E_0|\tilde{\Psi}\rangle. 
\end{equation}
We can renormalize $|\tilde{\Psi} \rangle$ to be intermediately normalized, i.e., 
\begin{equation}
|\Psi \rangle  = 
\frac{1}{\langle \Psi_0 |\tilde{\Psi} \rangle }|\tilde{\Psi} \rangle. 
\end{equation}
By Theorem~\ref{th:ExpMapInv}, we then know that there exists a unique element $T\in\mathfrak{b}$ such that 
\begin{equation}
\label{eq:CCansatz}
|\Psi \rangle  = {\rm exp}(T)|\Psi_0 \rangle. 
\end{equation}
Substituting Eq.~\eqref{eq:CCansatz} in the electronic Schr\"odinger equation~\eqref{eq:SE} yields
\begin{equation}
H{\rm exp}(T)|\Psi_0 \rangle
= 
E_0{\rm exp}(T)|\Psi_0 \rangle
\Leftrightarrow
\left\lbrace
\begin{aligned}
\langle \Psi_0 |{\rm exp}(-T)H{\rm exp}(T)|\Psi_0 \rangle &= E_0\\
\langle \Psi |{\rm exp}(-T)H{\rm exp}(T)|\Psi_0 \rangle &= 0,\qquad \forall \langle \Psi | \perp  \langle \Psi_0|.
\end{aligned}
\right.
\end{equation}
Noting that the cluster matrix is a function of the cluster amplitudes ${\bf t}$, i.e., 
\begin{equation}
T({\bf t})
=
\sum_\mu t_\mu X_\mu,
\end{equation}
and that by construction $\langle \Psi_\mu | \Psi_0 \rangle=0$ for all $\mu$, we see that the coupled cluster amplitudes fulfill the square system of equations
\begin{equation}
\label{eq:ccEqs}
0
=\langle \Psi_\mu |{\rm exp}(-T({\bf t}))H{\rm exp}(T({\bf t}))|\Psi_0 \rangle 
=:f_\mu({\bf t})
\qquad \forall \mu.
\end{equation}
Mathematically, CC methods use roots of a high-dimensional non-linear function
\begin{equation}
\label{eq:CCfunction}
f_{\rm CC}: {\bf t} \mapsto [f_\mu({\bf t})]_\mu
\end{equation}
to characterize physical states. 
The above derivation proves that 
\begin{equation}
H|\tilde{\Psi} \rangle 
=
E_0|\tilde{\Psi} \rangle 
\Rightarrow
\left\lbrace
\begin{aligned}
\langle \Psi_0 |{\rm exp}(-T({\bf t}))H{\rm exp}(T({\bf t}))|\Psi_0 \rangle &= E\\
\langle \Psi_\mu |{\rm exp}(-T({\bf t}))H{\rm exp}(T({\bf t}))|\Psi_0 \rangle &= 0\qquad \forall \mu,
\end{aligned}
\right.
\end{equation}
for $T({\bf t})$ fulfilling 
$$
{\rm exp}(T({\bf t}))|\Psi_0 \rangle = \frac{1}{\langle \Psi_0 |\tilde{\Psi} \rangle }|\tilde{\Psi} \rangle.
$$
Note that the converse direction also holds.
Let $| \Psi \rangle \in \mathcal{H}^{(N)}$ be arbitrary and ${\bf t}$ fulfilling Eq.~\eqref{eq:ccEqs}. 
We define
$$
E_{\rm CC}({\bf t}) =  \langle \Psi_0 |{\rm exp}(-T({\bf t}))H{\rm exp}(T({\bf t}))|\Psi_0 \rangle.
$$
Then
\begin{equation}
\label{eq:CCiffRQ}
\begin{aligned}
\langle \Psi | (H - E_{\rm CC}) {\rm exp}(T)|\Psi_0 \rangle
&=
\langle \Psi | {\rm exp}(T){\rm exp}(-T)(H - E_{\rm CC}) {\rm exp}(T)|\Psi_0 \rangle\\
&=
\langle \Psi | {\rm exp}(T)| \Psi_0 \rangle \langle \Psi_0|{\rm exp}(-T)(H - E_{\rm CC}) {\rm exp}(T)|\Psi_0 \rangle\\
&\quad + \sum_\mu 
\langle \Psi | {\rm exp}(T)| \Psi_\mu \rangle \langle \Psi_\mu|{\rm exp}(-T)(H - E_{\rm CC}) {\rm exp}(T)|\Psi_0 \rangle\\
&=0.
\end{aligned}
\end{equation}
Since $| \Psi \rangle \in \mathcal{H}^{(N)}$ was chosen arbitrarily, this shows that ${\rm exp}(T)|\Psi_0 \rangle$ is an eigenvector of $H$ corresponding to the eigenvalue $E_{\rm CC}$.

In practical applications, $|\Psi \rangle $ is of course not known, instead, we seek to find an amplitude vector ${\bf t}$ that fulfills the non-linear equations~\eqref{eq:ccEqs}.
Moreover, we are considering the subspace $\bar{\mathfrak{b}} \subset \mathfrak{b}$ instead of the full space $\mathfrak{b}$.
In order to still obtain a square system of equations, i.e., as many variables as equations, we merely consider the equations that arise from projections that correspond to the excitation matrices used to expand the sought cluster matrix.

It is worth noticing that in this case, the coupled cluster solution is no longer equivalent to the quantum mechanical energy expression. 
In fact, this does -- in general -- not even yield an eigenpair.
This becomes apparent by inspecting Eq.~\eqref{eq:CCiffRQ} and noting that in order to be exactly zero, the CC equations have to contain projections onto all basis functions $\langle \Psi_\mu |$.  

Restrictions to different $\bar{\mathfrak{b}}$ can be motivated from many physical and chemical perspectives, however, mathematically, we consider these restrictions to be sparsity patterns enforced onto the CC amplitude vector ${\bf t}$.
In this context, it is worth noticing that there exists no mathematical result showing the general existence of a sparsity pattern, a sought sparsity pattern is rather the result of computational limitations as well as many computational results indicating that even for complicated systems a certain sparsity in {\bf t} is apparent. 
As such we think of this as a conjecture rather than a fact. 

As a system of nonlinear equations, the equations~\eqref{eq:ccEqs} have a number of solutions. 
Speaking of {\it the} coupled cluster solution bears therefore a certain level of ambiguity. 
Most coupled cluster implementations seek a solution that is close to zero employing a quasi-Newton approach and an initial guess for ${\bf t}$ that stems from MP2. 
Given the convergence behavior of quasi-Newton methods, together with the interesting structures that arise when considering the basins of convergence, this approach seems appropriate for a set of ``well-behaved'' problems but is not a generally applicable procedure. 
This has resulted in a number of numerical advances together with chemically or physically motivated adjustments of the considered system of equations.

\section{Analysis}
\label{sec:Analysis}

The numerical analysis of coupled cluster methods witnessed a significant surge since 2009 when Schneider published the pioneering work that introduced the first local analysis based on Zarantonello's lemma~\cite{zeidler2013nonlinear} to coupled cluster theory. 
This work set the stage for several follow-up works and motivated the exploration of alternative mathematical frameworks well-suited for describing coupled cluster methods.

In Section~\ref{sec:LocalAnalysis}, we outline Schneider's approach and elaborate on the central ideas. We then proceed in Section~\ref{sec:CCExcitationGraph} by introducing the graph-based framework for CC methods developed by Csirik and Laestadius. This perspective introduced novel ideas offering a unified platform to compare various CC methods, including multireference approaches. 
In Section~\ref{sec:Inf-sup} we then elaborate on the most recent numerical analysis results characterizing the single reference CC method. The authors Hassan, Maday, and Wang presented yet another and -- compared to the local analysis -- a more general approach based on the invertibility of the CC Fréchet derivative.\\

Before delving into these analytical characterizations of CC theory, we have to elaborate on three subtle mathematical details that are important to keep in mind when reading this section:

The first is related to the wave function. As outlined earlier, the most general space in which we seek to find a solution to the electronic Schr\"odinger equation is an anti-symmetrized Sobolev space~\cite{schneider2009analysis}. Although we will avoid this detail explicitly in the subsequent elaborations, it is an important detail and a central concept that appears in all analysis works related to CC theory. From a quantum chemistry perspective, seeking a solution within this space ensures finite kinetic energy, in other words:
\begin{equation}
\int_{X^N} |\nabla \psi(\mathbf x_1, \dots,\mathbf x_N )|^2 \mathrm{d}\lambda( \mathbf x_1) \dots \mathrm{d}\lambda( \mathbf x_N) < +\infty.
\end{equation}
For more details on Sobolev spaces, we refer the interested reader to mathematical textbooks~\cite{leoni2017first,aubin2011applied,hackbusch2017elliptic,yserentant2010regularity} or relevant articles that offer insights into their application in quantum chemistry~\cite{laestadius2019coupled,faulstich2020mathematical,rohwedder2010analysis,yserentant2004regularity}. This extra constraint of finite kinetic energy is particularly important for the continuous (i.e., infinite dimensional) formulation of coupled-cluster theory~\cite{rohwedder2013continuous}. 
In this context, we remind the reader of the notation for the $L^2$-inner product $\langle \psi'|\psi\rangle $, and its induced norm $\Vert \psi \Vert^2_{L^2} = \langle \psi |  \psi\rangle$.

The second important detail is a measure of distance on matrix and operator spaces. We here consider operators that act on the wave functions, e.g., the Hamiltonian $H$, cluster operators $T$, $\Lambda$, etc. We can then introduce a norm expression for the operator inherited from the function space it is defined on. For example, let $O$ be an operator defined on $L^2$ then we define the $L^2$ operator norm 
\begin{equation}
\Vert  O \Vert_{L^2}
=
\sup \{ \Vert O\psi \Vert_{L^2} ~:~  \Vert \psi \Vert_{L^2} = 1\}. 
\end{equation}
Note that this concept reduces to induced matrix norms in the finite-dimensional case.

The third detail is that the CC amplitudes ${\bf t}$ live in the Hilbert space of finite square summable sequences denoted the $\ell^2$-space. This space is equipped with the $\ell^2$-inner product~\cite{aubin2011applied}, i.e., let $x=(x_\mu)$ and $y=(y_\mu)$ be two finite sequences, the $\ell^2$-inner product is defined as 
\begin{equation}
\langle x, y \rangle_{\ell^2} = \sum_\mu x_\mu y_\mu,    
\end{equation} 
which induces the norm $\Vert x \Vert^2_{\ell^2} = \langle x, x\rangle_{\ell^2}$.

\subsection{Local strong monotonicity}
\label{sec:LocalAnalysis}

The local analysis introduced by Schneider~\cite{schneider2009analysis} has spawned various works following a similar methodology analyzing different CC methods: Rohwedder generalized it to infinite dimensions~\cite{rohwedder2013continuous,rohwedder2013error}, Laestadius and Kvaal adapted it for the extended CC framework~\cite{laestadius2018analysis}, and Faulstich et al.~adapted it for tailored CC methods~\cite{faulstich2019analysis}. 
Central to all these local analyses is the local version of Zarantonello's lemma~\cite{zeidler2013nonlinear}.

The local version of Zarantonello's lemma states that -- under certain conditions -- a function is (locally) invertible. In the context of coupled cluster theory, the function under investigation is the CC function $f_{\rm CC}$ defined in Eq.~\eqref{eq:CCfunction}. The local invertibility of this function yields the local existence and uniqueness of a CC solution. To ensure the applicability of Zarantonello's lemma, the function in question must exhibit specific characteristics of mathematical ``well-behavedness''. Specifically, in this context, it means that the function must satisfy two essential properties:\\

\emph{Local strong monotonicity}.
The function $f_\mathrm{CC}$ is called locally strongly monotone at $\mathbf{t}_*$ if for some $r>0$, $\gamma>0$ and all $\mathbf{t},\mathbf{t}'$ within the distance $r$ of $\mathbf{t}_*$
\begin{equation}  
\label{eq:delta-def}
\langle f_\mathrm{CC}(\mathbf{t}) - f_\mathrm{CC}(\mathbf{t}'), \mathbf{t} - \mathbf{t}' \rangle_{\ell^2} \geq \gamma \Vert \mathbf{t}-\mathbf{t}' \Vert_{\ell^2}^2.
\end{equation}

\emph{Local Lipschitz continuity}.  
The function $f_\mathrm{CC}$ is said to be locally Lipschitz continuous at $\mathbf{t}_*$ with Lipschitz constant $L > 0$ if for some $r>0$ and all $\mathbf{t},\mathbf{t}'$ within the distance $r$ of $\mathbf{t}_*$
\begin{equation}
    \Vert f_\mathrm{CC}(\mathbf{t}) - f_\mathrm{CC}(\mathbf{t}') \Vert_{\ell^2} \leq L \Vert \mathbf{t} - \mathbf{t}' \Vert_{\ell^2},
\end{equation} 
Note that in the finite-dimensional case, $f_\mathrm{CC}$ is indeed locally Lipschitz since it is continuously differentiable.\\

In the context of CC theory, the difficult property to prove is that $f_{\rm CC}$ -- or the respective function describing the CC method under consideration -- is locally strongly monotone. Here, all analyses generally follow a similar pattern:
Inspecting the left-hand side in Eq.~\eqref{eq:delta-def} we  begin by a tailor expansion of $f_{\rm CC}$ around $\mathbf{t}_*$, i.e.,
\begin{equation}
f_\mathrm{CC}(\mathbf{t}) - f_\mathrm{CC}(\mathbf{t}')
= Df_\mathrm{CC}(\mathbf{t}_*) (\mathbf{t} - \mathbf{t}') + \mathcal{O}\left((\mathbf{t}-\mathbf{t}')^2\right),
\end{equation}
where $Df_\mathrm{CC}$ is the Jacobian of $f_{\rm CC}$. This yields
\begin{equation}  
\langle f_\mathrm{CC}(\mathbf{t}) - f_\mathrm{CC}(\mathbf{t}'), \mathbf{t} - \mathbf{t}' \rangle_{\ell^2} 
=
\langle  Df_\mathrm{CC}(\mathbf{t}_*) (\mathbf{t} - \mathbf{t}'), \mathbf{t} - \mathbf{t}' \rangle_{\ell^2} 
+
\mathcal{O} \left(\Vert \mathbf{t}-\mathbf{t}'\Vert^3\right).
\end{equation}
At this point, it is common to impose certain locality assumptions, i.e., assuming that $\mathbf{t}$ and $\mathbf{t}'$ are close enough to $\mathbf{t}_*$. This ensures that the term $\mathcal{O} \left(\Vert \mathbf{t}-\mathbf{t}'\Vert^3\right)$ is sufficiently small. In order to control the remaining term, the derivative of $f_{\rm CC}$ can explicitly be computed, which yields
\begin{equation}
\langle  Df_\mathrm{CC}(\mathbf{t}_*) (\mathbf{t} - \mathbf{t}'), \mathbf{t} - \mathbf{t}' \rangle_{\ell^2} 
=
\langle  (T-T')\Psi_0, e^{-T_*}\left(H - E_{\rm CC}({\bf \mathbf{t}_*})\right)e^{T_*} (T-T') \Psi_0\rangle_{L^2}.
\end{equation}
The next step involves expanding the similarity-transformed Hamiltonian, i.e.,  $e^{-T_*}H e^{T_*}$, using the Hausdorff Lemma~\cite{hall2013lie}, which is an important lemma derived from the Baker--Campbell--Hausdorff formula. This yields
\begin{equation}
e^{-T_*}\left(H - E_{\rm CC}({\bf t_*})\right)e^{T_*}
=
\left(H - E_{\rm CC}({\bf t_*})\right)
-T_*\left(H - E_{\rm CC}({\bf t_*})\right)
+\left(H - E_{\rm CC}({\bf t_*})\right)T_*
+...
\end{equation}
Again, imposing locality of $t$ and $t'$ around $t_*$ will ensure that higher-order terms become negligible. This yields that 
\begin{equation}
\begin{aligned}
\langle  (T-T')\Psi_0,& e^{-T_*}\left(H - E_{\rm CC}({\bf t_*})\right)e^{T_*} (T-T') \Psi_0\rangle_{L^2}\\
&=
\langle  (T-T')\Psi_0, \left(H - E_{\rm CC}({\bf t_*})\right) (T-T') \Psi_0\rangle_{L^2}\\
&\quad +
\langle  (T-T')\Psi_0, \left(H - E_{\rm CC}({\bf t_*})\right)(T_* - T_*^\dagger) (T-T') \Psi_0\rangle_{L^2}\\
&\quad+
...\\
\end{aligned}
\end{equation}
The first term in this expansion, i.e., 
\begin{equation}
\langle  (T-T')\Psi_0, \left(H - E_{\rm CC}({\bf t_*})\right) (T-T') \Psi_0\rangle_{L^2},
\end{equation}
can then be bounded by imposing different spectral gap assumptions depending on the CC method under consideration. 
The applicability of different spectral gap assumptions is reasonable in the context of CC methods. 
For the remainder term, 
\begin{equation}
\langle  (T-T')\Psi_0, \left(H - E_{\rm CC}({\bf t_*})\right)(T_* - T_*^\dagger) (T-T') \Psi_0\rangle_{L^2}
\end{equation}
further ``well-behavedness'' assumptions have to be made which commonly involve the fluctuation potential $W = H -F$, where $F$ is the Fock matrix. Opposed to the spectral gap assumption there is much less known about the feasibility of such assumptions. 
Combining these estimates yields an approximate strong monotonicity constant denoted by $\Gamma$.
The positivity of this constant varies depending on the system being analyzed, indicating that such an analysis is not universally applicable. For specific values of $\Gamma$ across different systems, we refer to Table~\ref{tab:SMPvsSUPINF} where these variations are detailed.
Moreover, {\it a prior} error estimates can be established using the general framework introduced by Bangerth and Rannacher~\cite{bangerth2003adaptive}.

\subsection{Excitation graphs and topological degree}
\label{sec:CCExcitationGraph}

In a series of two articles~\cite{csirik2023coupled1,csirik2023coupled2} Csirik and Laestadius propose a novel and comprehensive mathematical framework for
Coupled-Cluster-type methods. 

In the first article of this series~\cite{csirik2023coupled1}, the authors develop a graph theoretical approach offering a new interpretation of the excitation structures in various CC methods through a graph-based framework. This method is particularly potent as it enables a cohesive analysis of both single and multi-reference CC methods within a unified structure. To illustrate this concept, we consider a simplified scenario with five spin orbitals, labeled $\{1,...,5\}$, and two reference states, $\{1,2,3\}$ and $\{1,2,4\}$. The array of possible excitations in this setup can be effectively represented using a graph, where each edge symbolizes a potential excitation. This graphical representation is detailed in Figure~\ref{fig:CC-graph}, providing a clear and structured visualization of the excitation dynamics.

\begin{figure}[h!]
    \centering
    \includegraphics[width = 0.8\textwidth]{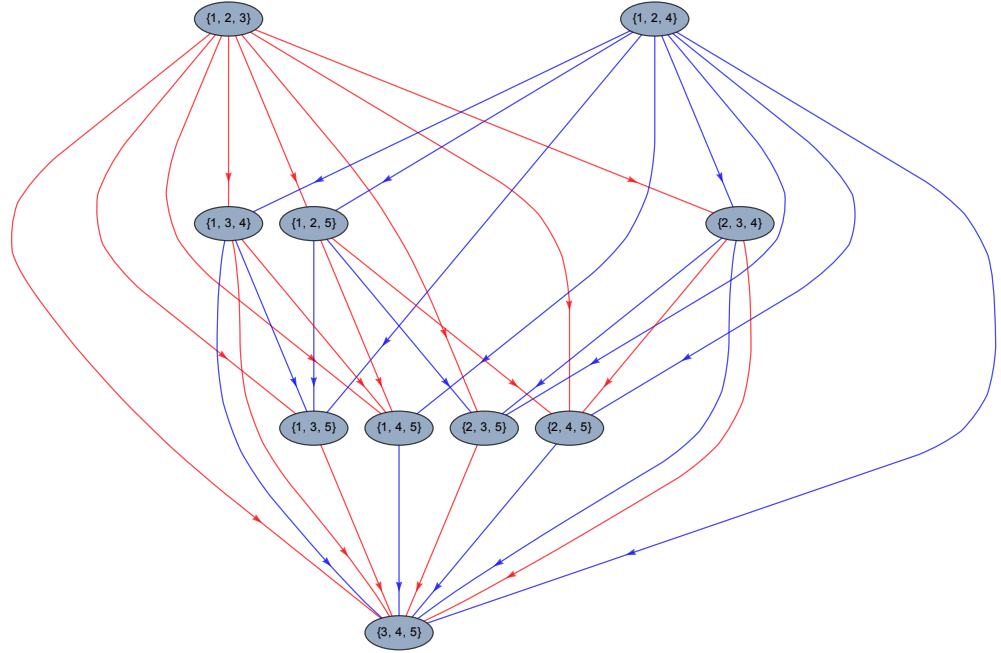}
    \caption{Full multi-reference excitation multigraph for five spin orbitals, labeled $\{1,...,5\}$  and two reference states, $\{1,2,3\}$ and $\{1,2,4\}$. The excitations w.r.t the references $\{1,2,3\}$ and $\{1,2,4\}$ are represented as edges. To distinguish the excitations from $\{1,2,3\}$ and $\{1,2,4\}$, the edges are colored in red and blue, respectively. See~\cite{csirik2023coupled1} for more details.}
    \label{fig:CC-graph}
\end{figure}

Analyzing the excitation graph itself can lead to insights about the considered CC method. As an example, the transitivity of the graph implies the algebraic closedness of the set of excitation operators. 

In the second article of this series~\cite{csirik2023coupled2}, the authors analyze the nonlinear equations arising in the single reference CC method using topological degree theory. This mathematical tool is instrumental in decoding and resolving specific equation types that entail mappings between topological spaces. When applied to the CC map, topological degree theory allows for the deduction of local existence and uniqueness of the CC solutions. Additionally, it facilitates the extraction of the topological index for solutions within the single reference CC framework. In general, the topological index of a root in a nonlinear map is particularly enlightening, shedding light on the root's inherent nature, especially regarding its stability and the map's behavior in its vicinity. In this context, the authors successfully demonstrate the application of topological index results to both non-degenerate and degenerate solutions in the single reference CC method, providing deeper insights into the underlying mathematical structure of these solutions.

In addition to their exploration of nonlinear equations in the single reference CC method, the authors also investigate the complex issue of discerning the ``physicality'' of solutions to truncated CC equations. This area of research has been pivotal in distinguishing between ``physical'' solutions, which accurately mirror real-world phenomena, and ``unphysical'' ones, which are considered irrelevant or misleading. A landmark study by Kowalski and Piecuch~\cite{piecuch2000search} played a crucial role in this context, employing a specific homotopy method to categorize these solutions. Despite some debate over the universality of this method, as noted by Csirik and Laestadius in Remark 4.30 in~\cite{csirik2023coupled2}, the contributions of Kowalski and Piecuch were significant -- to the extent that the authors christened this particular homotopy the ``Kowalski--Piecuch homotopy'', or ``KP homotopy'' for short. Unveiled the intricate nature of solutions to truncated CC equations, results in~\cite{piecuch2000search} highlighted the need for deeper analytical scrutiny.
This revelation has spurred further examination of the CC equations and the KP homotopy approach, with a renewed focus on employing topological degree theory. By doing so, Csirik and Laestadius have markedly enhanced our comprehension of the complex nature inherent in truncated CC equations, offering new perspectives and deeper insights into their behavior and implications.

\subsection{Inf-Sup condition}
\label{sec:Inf-sup}

In their two-part series of articles~\cite{hassan2023analysis,hassan2023analysis2}, Hassan, Maday, and Wang have made substantial advancements in our analytical grasp of the CC function $f_{\rm CC}$ defined in Eq.~\eqref{eq:CCfunction}.
To avoid an {\it ad hoc} bound onto the fluctuation potential as imposed in Sec.~\ref{sec:LocalAnalysis}, the authors instead prove the local invertibility of the CC function through a classical inf-sup type argument that marks a significant shift in the analytical methodology employed in CC theory.
Such an inf-sup condition, also called the Babuška–Brezzi condition which is a technique commonly used when analyzing indefinite elliptic partial differential equations, can be summarized as follows:

Consider the bounded linear mapping $A$ between two normed spaces $(V,\Vert \cdot \Vert_V)$ and $(W,\Vert \cdot \Vert_W)$ -- note that the Jacobian naturally fulfills these assumptions. 
The Babuška–Brezzi condition states that there exists a constant $\alpha > 0$ such that
\begin{equation}
\label{eq:inf-sup}
\inf_{\substack{v\in V\\ v\neq 0}}
\sup_{\substack{w \in W \\ w\neq 0}}
\frac{|A(v,w)|}{\Vert v \Vert_V \Vert w \Vert_W} \geq 0 
\quad
{\rm and}
\quad
\inf_{\substack{w \in W \\ w\neq 0}}
\sup_{\substack{v\in V\\ v\neq 0}}
\frac{|A(v,w)|}{\Vert v \Vert_V \Vert w \Vert_W} \geq \alpha ,
\end{equation}
see~\cite{sauter2011boundary} for more details. 
This condition ensures that the operator $A$ is neither ``too weak'' nor ``too strong'', in the sense that it maps elements of $V$ and $W$ in a balanced way. 
Gaining a clearer understanding becomes easier in the context of finite dimensions: In this scenario, striving for local strong monotonicity, as detailed in Sec.~\ref{sec:LocalAnalysis}, is analogous to verifying that a matrix is positive definite. 
Similarly, the inf-sup condition, as described in this section, can be likened to establishing a matrix's invertibility.
Note that in the realm of finite dimensions, a square matrix's invertibility can be deduced solely from its injectivity. However, the infinite-dimensional scenario demands a bit more nuance. This is why we see two distinct conditions in Eq.~\eqref{eq:inf-sup}, reflecting the additional complexity inherent in infinite dimensions.

In connection with single reference coupled cluster theory, the authors establish this condition for the similarity transformed shifted Hamiltonian, which arises from the coupled cluster Jacobian, see~\cite{hassan2023analysis}. As shown subsequently, proving local invertibility based on this inf-sup condition yields more generally applicable well-posedness results compared to the local analysis techniques described in Sec.~\ref{sec:LocalAnalysis}. 

\subsubsection{Overview of the inf-sup type argument}
The essence of the analysis presented in~\cite{hassan2023analysis} is that the CC function can be locally inverted if and only if its Jacobian, referred to as the CC Jacobian, can be locally inverted. Recall that the CC Jacobian is given by
\begin{equation}
\langle w, Df_{\rm CC}(t) v \rangle
=
\langle
W \Psi_0, e^{T(t)} [H,S] e^{T(t)} \Psi_0
\rangle
\end{equation}
and we introduce the map $A$ via the description
\begin{equation}
\label{eq:CCJac}
\langle
W \Psi_0, A(t) S \Psi_0
\rangle
=
\langle
W \Psi_0, e^{T(t)} [H,S] e^{T(t)} \Psi_0
\rangle
.
\end{equation}

Note that the CC Jacobian $Df$ at ${\bf t}$ is then invertible if and only if $A$ is invertible at ${\bf t}$. The authors leverage this observation and work with $A$ instead of the Jacobian $Df$. Moreover, $A({\bf t}_*)$ is a similarity transformed of the shifted Hamiltonian, in particular, it is non-symmetric! Therefore, one can either study $A({\bf t})$ or $A^\dagger({\bf t})$, both approaches are equivalent, yet one approach might be simpler than the other. Indeed, the authors establish the following two key results which yield the invertibility of $A$ at ${\bf t}_*$ which then yields the invertibility of the CC Jacobian $Df_{\rm CC}$ at ${\bf t}_*$, see Theorem 31 in~\cite{hassan2023analysis}. 
First, the authors prove that at the true, untruncated CC solution ${\bf t}_*$, the function $A({\bf t}_*)$ is injective, see step one in the proof of Theorem 31 in~\cite{hassan2023analysis}. This is equivalent to the first inequality in Eq.~\eqref{eq:inf-sup}. Second, the authors establish that $A^\dagger({\bf t}_*)$ is bounded below, see step two in the proof of Theorem 31 in~\cite{hassan2023analysis}.
This is equivalent to the second inequality in Eq.~\eqref{eq:inf-sup}. Combining these results yields that $A$ at ${\bf t}_*$ is invertible.  A direct consequence of this is that the CC Jacobian $Df$ evaluated at ${\bf t}_*$ is invertible, and its inverse is bounded, i.e., 
\begin{equation}
\label{eq:errorEq}
\Vert 
Df^{-1}({\bf t}_*)
\Vert \leq \frac{\Theta}{\Upsilon},
\qquad
{\rm with}\qquad 
\Theta
=
\Vert e^{T^\dagger({\bf t}_*)}\Vert \Vert \mathbb{P}_0^\perp e^{-T({\bf t}_*)} \Vert,
\end{equation}
where $\Upsilon$ is the inf-sup constant from Eq.~\eqref{eq:inf-sup}, and  $\mathbb{P}_0^\perp$ is the projection onto the space orthogonal to ${\rm span}(\Psi_0)$.

These results, in turn, can then be leveraged to establish that the CC function $f_{\rm CC}$, under some assumptions (see Theorem 33 in~\cite{hassan2023analysis}), is locally invertible around ${\bf t}_*$ and $f_{\rm CC}$ as well as its local inverse are differentiable -- in mathematical parlance, $f_{\rm CC}$ is a local diffeomorphism.  
Moreover, the authors establish a local error bound of the form
\begin{equation}
\Vert {\bf t}_* - {\bf t} \Vert \leq 2 \frac{\Theta}{\Upsilon} \Vert f_{\rm CC}({\bf t}_*)\Vert .
\end{equation}

\subsubsection{Interpretation and results of the inf-sup argument}
Similarly to the local analysis results elaborated on in Sec.~\ref{sec:LocalAnalysis}, the inf-sup argument relies on the positivity of the constants involved. 
The advantage of the analysis presented in~\cite{hassan2023analysis}, is that the constants are provably positive and therefore universally applicable. In particular, they do not rely on assumptions on the fluctuation potential. See Table~\ref{tab:SMPvsSUPINF} for some molecular test systems in equilibrium geometry, and Fig.~\ref{fig:HydrogenFluoride} for bond-dissociation of hydrogen fluoride.  

\begin{table}[h!]
    \centering
    \begin{tabular}{c|c|c|c}
Molecule 
& $1/\Vert Df^{-1}(t_*)\Vert$
& $\Upsilon/\Theta$ 
& $\Gamma$  \\
\hline
BeH$_2$ &0.3379  & 0.2568 & 0.0363 \\
BH$_3$  &0.3060  & 0.2081 & -0.0950 \\
HF      &0.2995  & 0.2529 & -0.0083 \\
H$_2$O  &0.3576  & 0.2789 & 0.0249 \\
LiH     &0.2628  & 0.2164 & -0.0065 \\
NH$_3$  &0.4113  & 0.2784 & -0.0325 \\
\end{tabular}
\caption{
Comparison of the approximate strong monotonicity constant $\Gamma$ (see Sec.~\ref{sec:LocalAnalysis}) and $\Upsilon/\Theta$ (see Eq.~\eqref{eq:errorEq}) -- both seeking a positive lower bound to $1/\Vert Df^{-1}(t_*)\Vert$. The calculations were performed in STO-6G basis sets except for the HF and LiH molecules
for which the 6-31G basis sets was
used. For more details see~\cite{hassan2023analysis}}.
\label{tab:SMPvsSUPINF}
\end{table}

\begin{figure}[h!]
    \centering
    \includegraphics[width=0.8\textwidth]{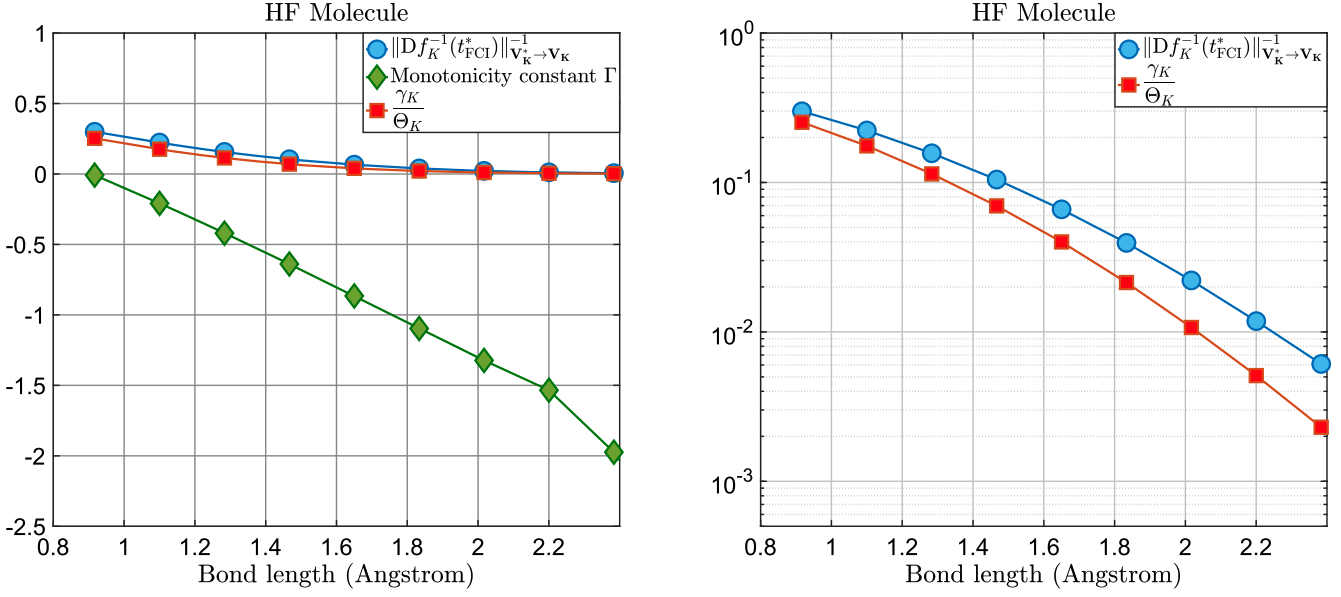}
    \caption{Numerically computed constants for the HF molecule at different bond lengths. The equilibrium
bond length is 0.9168 \AA. The figure on the right uses a log scale on the y-axis. For more details see~\cite{hassan2023analysis}}.
\label{fig:HydrogenFluoride}
\end{figure}

While the analytical approach outlined in~\cite{hassan2023analysis} encounters certain challenges, its contributions to the mathematical understanding of CC theory are pivotal. 
Initially, this analysis seemed limited to the untruncated CC framework, relating approximate untruncated CC solutions with the infinite-dimensional untruncated CC solutions. 
However, the authors adeptly addressed this in a subsequent publication~\cite{hassan2023analysis2}, successfully extending their findings to truncated CC methods. Another complexity lies in the computation of the involved constants in a numerically tractable manner. The constants involve operator norms which are in general not easily accessible, to say the least. 
Moreover, these constants are further linked to either the specific value of the untruncated CC solution ${\bf t}_*$ or the spectral properties of related operators. 
Despite this, the potential for practical application remains promising. Future work could focus on developing manageable approximations of these constants, thereby making the insights from~\cite{hassan2023analysis,hassan2023analysis2} more accessible for practical simulations.

In summary, this novel analytical approach has significantly advanced our understanding of the local behavior of the CC function. It introduces a sound mathematical framework for understanding its local behavior, thereby greatly enriching our knowledge in this area.

\section{The root structure of CC theory}
\label{sec:RootStructure}

As outlined in~\cite{faulstich2022coupled,faulstich2023homotopy}, the root structure of a polynomial system is (in general) of fundamental importance. 
It unveils key aspects~\cite{hubbard2001find}, such as the multiplicity of roots and the nature of these roots, e.g., whether they are real or complex.
Such insights are especially vital when employing (approximate) root-finding methods in practical applications.
The pursuit of roots to the CC equations~\eqref{eq:ccEqs} is a direct application of these principles.

In the context of CC methods, most commonly, (quasi) Newton-type methods are employed to approximate {\it one} root of the CC equations. From a computational perspective, (quasi) Newton-type methods have better numerical scaling than more general root-finding procedures. Additionally, in a perturbative regime, one could argue that the CC amplitudes can be viewed as minor corrections to the HF solution. Consequently, it may be sufficient to approximate a single root near zero, which represents a small change to the HF solution. This perturbation theoretical reasoning is of paramount importance in understanding the current computational and theoretical practices in CC theory. In particular, it justifies the use (quasi) Newton-type methods and also explains the quantum chemical rule of thumb, namely, ``Do not trust simulations with large CC amplitudes''. However, it is very important to note that: \\

\noindent
This reasoning does not cover all cases where CC theory can be successfully applied!~\cite{bulik2015can,giner2018interplay,faulstich2023S_diag}\\

The rule of thumb may be fine in the regime of weakly correlated systems, but it certainly breaks down for strongly correlated systems~\cite{faulstich2022coupled,faulstich2023S_diag}. For strongly correlated systems, it is common practice to make a case-by-case assessment of the computed results, currently limiting the reliable out-of-the-box application of CC methods.
To illustrate the limitations of the perturbation theoretical perspective in fully comprehending CC theory, consider the single polynomial $p(z) = z^3-1$, which has three distinct roots: $z_1 = 1$, and $z_{2,3} = 1/2 \pm i\sqrt{3}/2$. 
Applying Newton's method to approximate {\it one} root to this system, we notice that, depending on the initialization, a different solution is found. This can be visualized by sampling a feasible region in $\mathbb{C}$ and using these points as initialization for Newton's method. Depending on which root was approximated, we then color each point accordingly. This yields the known Newton fractal corresponding to $p(z)$, see Fig.~\ref{fig:Motivation} (left panel).

\begin{figure}[h!]
     \centering
     \begin{subfigure}[b]{0.495\textwidth}
    \centering
    \includegraphics[width = \textwidth]{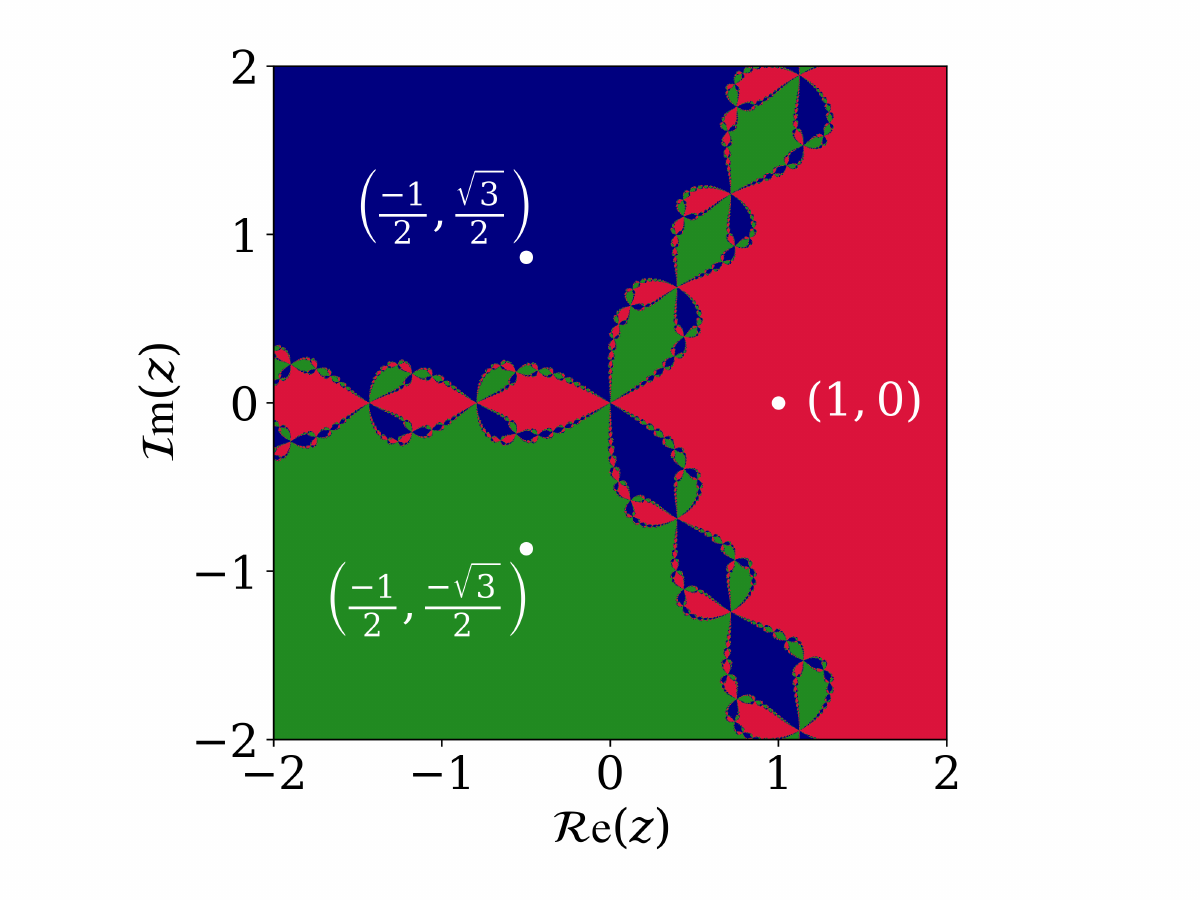}
    \label{fig:Newton_fractal}
     \end{subfigure}
     \hfill
     \begin{subfigure}[b]{0.495\textwidth}
    \centering
    \includegraphics[width = \textwidth]{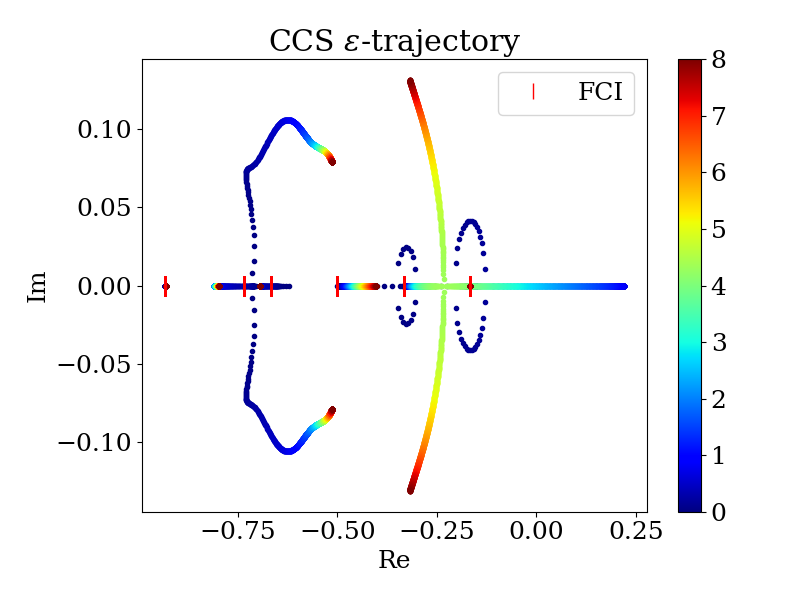}
    \label{fig:strong_correlation}
     \end{subfigure}
      \caption{\label{fig:Motivation}
      Left: Newton fractal of $p(z) = z^3 -1$. The white dots correspond to the roots $z_1, z_{2}, z_{3}$. The colored regions, red, blue, and green, correspond to the basins of attraction of the roots $z_1$, $z_2,z_3$, respectively~\cite{faulstich2023homotopy}. 
      Right: Energy trajectory of CCS solutions for a two-electron system. The overlap of the eigenstates with the reference state is steered by the parameter $\varepsilon$. The plot shows the $\varepsilon$-energy trajectory of all CCS solutions, where $\varepsilon$ was varied between zero and eight. For more details see~\cite{faulstich2022coupled}.}
\end{figure}

This shows that around the individual roots Newton's method convergence towards the closest root. However, it also shows that the global convergence behavior of (quasi) Newton-type methods is highly complicated~\cite{hubbard2001find, schleicher2002number}.
One can only imagine how intricate the Newton fractal of the high-dimensional CC equations is.
These considerations raise the pressing question:\\

\noindent
\begin{center}
Which CC root has been approximated, and is this the ``best'' solution attainable with the considered CC method?
\end{center}

To definitively answer this question one must leave the perturbative framework, theoretically as well as practically!
Mathematically, the most promising framework for studying systems of polynomial equations is algebraic geometry. This field not only provides a set of advanced theoretical tools but also has seen a tremendous surge in computational advances. 
Exploiting parallel implementations, computational procedures (mostly) based on the homotopy continuation method, e.g., PHCpack~\cite{verschelde1999algorithm}, Bertini~\cite{bates2006bertini}, HOM4PS~\cite{chen2014hom4ps,lee2008hom4ps}, NAG4M2~\cite{bates2023numerical}, and HomotopyContinuation.jl~\cite{breiding2018homotopycontinuation}, provide a reasonable starting point to numercally investigate the intricate root structures of the high-dimensional and non-linear CC equations.

Within the chemistry community, the root structure of the CC equations has been studied at a fundamental level with the goal of including homotopy continuation methods in the CC methodology.
The first study on this topic dates back to 1978 when Živkovič and Monkhorst investigated the singularities and multiple solutions of the equations~\cite{vzivkovic1978analytic}.
This was followed by mathematical and numerical studies of multiple solutions of the single-reference and state-universal multi-reference CC equations and their singularities and analytic properties in the early 1990s by Paldus and coworkers~\cite{piecuch1990coupled,paldus1993application}. 
In 1998, Kowalski and Jankowski revived the homotopy methods in connection with the CC theory and used them to solve the CC equations with doubles for a minimum-basis-set four-electron problem~\cite{kowalski1998towards}.
This was followed by a fruitful collaboration of Kowalski and Piecuch, who extended the application of the homotopy methods
to the equations defining the CC approaches with singles and doubles (CCSD), singles, doubles,
and triples (CCSDT), and singles, doubles, triples, and quadruples (CCSDTQ)~\cite{piecuch2000search}, again using
a four-electron system described by a minimum basis set as a target. 
They also introduced the formalism of $\beta$-nested equations and proved the {\it Fundamental Theorem of the $\beta$-NE Formalism}, which enabled them to explain the behavior of the curves connecting multiple solutions of the various CC polynomial systems, i.e., from CCSD to CCSDT, CCSDT to CCSDTQ, etc. 
In~\cite{kowalski2000complete2}, Piecuch and Kowalski used homotopy methods to determine all solutions of nonlinear state-universal multireference CCSD equations based on the Jeziorski-Monkhorst ansatz, proving two theorems that provided an explanation for the observed intruder solution problem. 
In a sequel work~\cite{kowalski2000complete}, they used homotopy methods to obtain all solutions of the generalized Bloch equation, which is nonlinear even in a CI parametrization.\\
\indent
Despite these intensive investigations, the practical computational use of this approach has been restricted to only very small model systems, primarily because of two key reasons. Firstly, to effectively integrate computational algebraic methods with cutting-edge computational quantum chemistry, a substantial scientific divide must be bridged, one that involves advanced and abstract mathematical principles. Secondly, in the late 1980s and 1990s, the field of computational nonlinear algebra was in its infancy, presenting a pioneering yet challenging academic environment for advancements.\\

Recently, a novel computational shift adopting a fully algebraic geometry perspective of CC theory was established~\cite{faulstich2022coupled,faulstich2023algebraic}. This approach has demonstrated significant potential in reshaping our understanding of the CC theory~\cite{faulstich2022coupled,faulstich2023algebraic,faulstich2023homotopy,borovik2023coupled}. In preliminary works, the authors Faulstich, Oster, Strumfels, and Sverrisdóttir have demonstrated that the CC equations possess rich mathematical structures. By integrating these structures into the computational model, the authors were able to significantly reduce the computational scaling of {\it algebro computational} methods applied to the CC equations allowing the computation of all CC roots for small molecular systems~\cite{faulstich2023algebraic}.

The following chapter is outlined as follows. We begin with a brief review of the fundamental concepts underlying the homotopy continuation method in Sec.~\ref{sec:h-cont}. We then discuss different bounds to the number of roots to the CC equations and introduce the crucial concept of truncation varieties in Sec.~\ref{sec:boundingRoots}. In Section~\ref{sec:AlgbRootsNumnerics}, we review the essential numerical discoveries yielded by this approach, providing a detailed analysis of its implications.

\subsection{Homotopy continuation}
\label{sec:h-cont}

Most {\it algebro computational} methods are built on the idea of homotopy continuation -- the numerical approach established in~\cite{faulstich2023algebraic} is no exception.
The idea of homotopy continuation is simple: continuously transform a simple system of polynomials with known solutions into a more complex one and track the paths of these solutions. More formally, we consider the CC equations, written in the following form
\begin{equation}
\label{eq:TargetSystem}
f_{\rm CC}({\bf t})
=
\left[\begin{array}{c}
f_1({\bf t}) \\
\vdots \\
f_m({\bf t})
\end{array}\right]
=
\left[\begin{array}{c}
f_1(t_1, ..., t_m) \\
\vdots \\
f_m(t_1, ..., t_m)
\end{array}\right]
=
0.
\end{equation}

This is our target system, i.e., the system we wish to solve.
In a general case, we require the number of equations to be larger than the number of variables, however, the CC equations are a square system, i.e., we have as many equations as variables.
In order to find all roots to the system in Eq.~\eqref{eq:TargetSystem}, we construct an auxiliary system of polynomial equations denoted $g(\mathbf{t})=0$.
For the construction of this system, two fundamental criteria must be met: firstly, its roots of $g$ should be known, and secondly, the sytem $g$ must have at least as many roots as the target system $f_{\rm CC}$. While meeting the first condition is relatively simple, the second condition poses a greater challenge, as accurately determining the number of roots in the CC equations is a hard problem, see Sec.~\ref{sec:boundingRoots}. Having $f_{\rm CC}$ and $g$, we define a family of systems $H(\mathbf{t}, \lambda)$ for $\lambda \in \mathbb{R}$ interpolating between $f_{\rm CC}$ and $g$, i.e., $H(\mathbf{t}, 0)=f_{\rm CC}(\mathbf{t})$ and $H(\mathbf{t}, 1)=g(\mathbf{t})$. 
For the sake of illustration, we now consider one root $\mathbf{s}_0$ of $g$ and restrict $\lambda \in[0,1]$. 
The condition $H(\mathbf{t}, \lambda)=0$ then defines a solution path $\mathbf{t}(\lambda) \subset \mathbb{C}^m$ such that $H(\mathbf{t}(\lambda), \lambda)=0$ for $\lambda \in[0,1]$ and $\mathbf{t}(1)=\mathbf{s}_0$. Numerically, this path is followed from $\lambda=1$ to $\lambda=0$ in order to compute one solution $\mathbf{t}_0=\mathbf{t}(0)$ to the target system $f_{\rm CC}$. 
This procedure is equivalent to solving the initial value problem
\begin{equation*}
\frac{\partial}{\partial \mathbf{t}} H(\mathbf{t}, \lambda)\left(\frac{\mathrm{d}}{\mathrm{d} \lambda} \mathbf{t}(\lambda)\right)+\frac{\partial}{\partial \lambda} H(\mathbf{t}, \lambda)=0, \quad \mathbf{t}(1)=\mathbf{s}_0,
\end{equation*}
which is known as the Davidenko differential equation~\cite{davidenko1953new,davidenko1953approximate}. We say that $\mathbf{t}(1)=\mathbf{s}_0$ gets tracked towards $\mathbf{t}(0)$. For this to work, $\mathbf{t}(\lambda)$ must be a regular zero of $H(\mathbf{t}, \lambda)=0$ for every $\lambda \in(0,1]$. In the case of nonregular solutions at $\lambda=0$ endgames are employed which are special numerical methods~\cite{morgan1992computing}.

\begin{figure}[h!]
    \centering
    \includegraphics[width = 0.45 \textwidth]{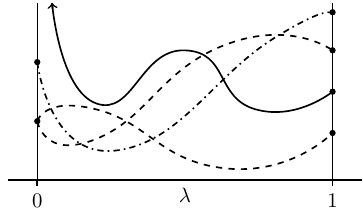}
    \caption{A sketch of possible homotopy paths. The solid line shows a path with no finite limit as $\lambda \to 0$, the dashed lines have the same limit, and the dotted-dashed line has a unique limit.}
    \label{fig:homotopies}
\end{figure}

In analyzing the solution paths traced by the homotopy, as illustrated in Fig.~\ref{fig:homotopies}, various scenarios may arise~\cite{bates2023numerical}. 
One path, represented by the solid line, diverges to infinity as $\lambda \to 0$. 
In contrast, the other three paths converge to finite limits. 
The path indicated by a dotted-dashed line uniquely converges to a regular zero of the target system $f_{\rm CC}$ at $\lambda = 0$. Meanwhile, the two paths denoted by dashed lines converge to a common limit, corresponding to an isolated zero of $f_{\rm CC}$ with a multiplicity of two.
Mathematically, homotopy continuation methods are well studied, we refer the interested reader to~\cite{bates2023numerical,garcia1979finding,morgan2009solving,sommese2005numerical}, and for a quantum chemistry perspective see~\cite{faulstich2023homotopy}.

\subsection{Bounding the number of roots}
\label{sec:boundingRoots}

As becomes apparent from Sec.~\ref{sec:h-cont}, knowing the precise count of roots, or at least a close upper bound, is crucial for the effective application of homotopy methods. This number dictates the number of roots in the auxiliary system $g$ and therewith determines the number of paths to be numerically tracked. Due to the high dimensionality, this turns out to be particularly challenging in the case of CC theory. Subsequently, we denote 
\begin{equation}
{\rm CCdeg}_{N,N_B}(\sigma)
\end{equation}
the true number of roots to the CC equations for a system of $N$ electrons discretized in $N_B$ spin orbitals imposing the CC truncation level $\sigma$, where e.g. $\sigma = \{1,2\}$ stands for CCSD, $\sigma = \{2\}$ stands for CCD, $\sigma = \{1,2,3\}$ stands for CCSDT, etc.

In order to establish a bound to the number of roots to the CC equations~\eqref{eq:ccEqs}, one can start with the simplest estimate for the number of roots in a polynomial system, namely, the Bézout number. The Bézout number is simply the product of the degrees of the individual polynomial equations. In the case of CCSD, this yields 
\begin{equation}
\label{eq:BezoutB}
{\rm CCdeg}_{N,N_B}(\{1,2\})
\leq 
3^{n_s}4^{n_d}
\end{equation}
where $n_s = N(N_B-N)$ is the number of singles equations and $n_d = (N-1)N(N_B-N-1)(N_B-N)$ is the number of doubles equations, see e.g.~\cite{piecuch2000search}.
The Bézout number often greatly overestimates the actual number of roots, as seen in the CC equations~\cite{faulstich2022coupled}. 
For the effective use of homotopy methods, however, it is essential to have precise and accurate estimates of the number of roots. 

One potential way to improve this bound is by means of the Bernstein-Khovanskii-Kushnirenko (BKK) theorem~\cite{Bernshtein1975,Kouchnirenko1976,cox2006using}. The BKK theorem provides a way to estimate the maximum number of solutions that a system of polynomial equations can have, based on the geometric properties of the equations' coefficients. More precisely, it states that for a system of polynomial equations, the number of isolated solutions in the complex domain is bounded by the mixed volume of the Newton polytopes corresponding to the polynomials.
In order to apply this theorem to CC theory, one must investigate the CC Newton polytopes and establish a way to compute or at least bound their mixed volume. This direction was explored in~\cite{faulstich2022coupled}.

Another auspicious direction is the use of {\it truncation varieties}.
This provides significantly improved bounds to the number of CC roots, see~\cite{faulstich2023algebraic} and Fig.~\ref{fig:CC_bounds_bernd}.
The truncation varieties are algebraic varieties specific to CC theory. In general, an algebraic variety is a set of solutions to one or more algebraic equations, typically defined in a higher-dimensional space, where these solutions form a geometric shape or structure. In the context of CC theory, there are several varieties, that appear. Consider the exponential parametrization
\begin{equation}
\label{eq:CCVarietyEq}
\exp~:~\mathcal{V} \to \mathcal{H}_{\rm int}
~;~
{\bf t} \mapsto 
|\Psi\rangle
=
\exp({\bf t})|\Psi_0 \rangle
=
|\Psi_0 \rangle 
+ 
\sum_{n=1}^N \frac{1}{n!} T^n |\Psi_0 \rangle,
\end{equation}
where $\mathcal{V}$ denotes the vector space of CC amplitudes.
Note that Eq.~\eqref{eq:CCVarietyEq} defines a set of algebraic equations. 
Imposing a certain level of truncation corresponds to restricting this map to a subspace of amplitudes $\mathcal{V}_\sigma \subseteq \mathcal{V}$, where $\sigma$ denotes the respective level of truncation as defined above.
We define the truncation variety $V_\sigma$ as the closure of the image of the exponential map of $\mathcal{V}_\sigma$. 
Since the exponential parametrization is invertible, the dimension of the variety $V_\sigma$ is the dimension of $\mathcal{V}_\sigma$.
The truncation varieties exhibit numerous mathematical properties, as elucidated in~\cite{faulstich2023algebraic}, which collectively lead to the bound
\begin{equation}
\label{eq:CCbound}
 {\rm CCdeg}_{N,N_B}(\sigma) 
 \leq 
\bigl({\rm dim}(V_\sigma) + 1 \bigr)  \, {\rm deg}(V_\sigma), 
\end{equation}
where ${\rm deg}(V_\sigma)$ is the degree of the truncation variety $V_\sigma$, which is an intrinsic quantity providing information about the variety's geometric and algebraic properties. In general, the degree of a variety in algebraic geometry refers to a measure of its complexity. It is typically defined as the number of intersections that the variety has with a general linear space of complementary dimension. In simpler terms, it is the number of points at which a linear space will intersect the variety, assuming it intersects it in the maximum possible number of points~\cite{cox2013ideals}. Computing the exact degree for a given truncation variety -- or at least a sufficiently good bound to it -- is the subject of current investigations.

Undertaking a formal comparison between the bounds presented in Eq.~\ref{eq:BezoutB} and Eq.~\eqref{eq:CCbound} is challenging given the fundamentally distinct nature of the underlying concepts involved. Despite this, a numerical comparison reveals that the bound in Eq.~\eqref{eq:CCbound} provides a significant improvement over the previously established bounds, see Sec.~\ref{sec:AlgbRootsNumnerics}.

\subsection{Numerical results}
\label{sec:AlgbRootsNumnerics}

We begin this section by taking a closer look at the convergence behavior of (quasi) Newton-type methods applied to the CC equations~\eqref{eq:ccEqs}. 
Since the dimensionality of the amplitude space grows rapidly, it is not possible to visualize the corresponding Newton fractal. 
However, we can obtain an idea of the size of the basin of attraction around one root, i.e., a ball around one solution within (quasi) Newton-type methods commonly converge to the solution at its center. 
To that end, we consider a variant of the H$_4$ model consisting of four hydrogen atoms symmetrically distributed on a circle of radius 
$R = 1.738$~\AA~\cite{van2000benchmark,bulik2015can} discretized in the STO-3G basis set, see Fig.~\ref{fig:H4_circ_mod}.

\begin{figure}[ht!]
    \centering
    \includegraphics{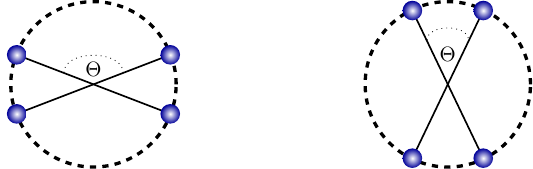}
    \caption{Depiction of the H$_4$ model undergoing a symmetric disturbance on a circle modeled by the angle  $\Theta$.}
    \label{fig:H4_circ_mod}
\end{figure}

For $\Theta = 90^\circ$ we obtain a CC solution ${\bf t}_0$ by initializing the (quasi) Newton-type method with zero. Adding a random perturbation ${\bf t}_p$ to this solution provides a different initialization ${\bf t}_{\rm init} = {\bf t}_0 + {\bf t}_p$ for the CC computations. Scaling the size of ${\bf t}_p$ (i.e., $\Vert {\bf t}_p \Vert$,) allows us to (approximately) investigate the basin of attraction. Clearly, comparing with Fig.~\ref{fig:Motivation}, we expect that the region for which Newton's method converges to ${\bf t}_0$ will not be circular. 
However, this investigation yields the ballpark for the local basin of convergence, since we can extract the radius of the largest ball $r_{\rm max}$ in which (quasi) Newton-type methods converge to the solution ${\bf t}_0$ in $99.9\%$ of the cases, see Fig.~\ref{fig:CCBasinOfAtt}. 
We moreover plot the success rate of Newton's method, i.e., how many of the randomly perturbed initializations converged toward ${\bf t}_0$, as a function of the size of ${\bf t}_p$. Note that we measure $\Vert {\bf t}_p\Vert$ relative to the size of ${\bf t}_0$, in particular, the initialization zero lies on the boundary of $\Vert {\bf t}_p \Vert =1$, see Fig.~\ref{fig:CCBasinOfAtt}.

\begin{figure}[ht!]
     \begin{subfigure}[t]{0.495\textwidth}
    \includegraphics[width = \textwidth]{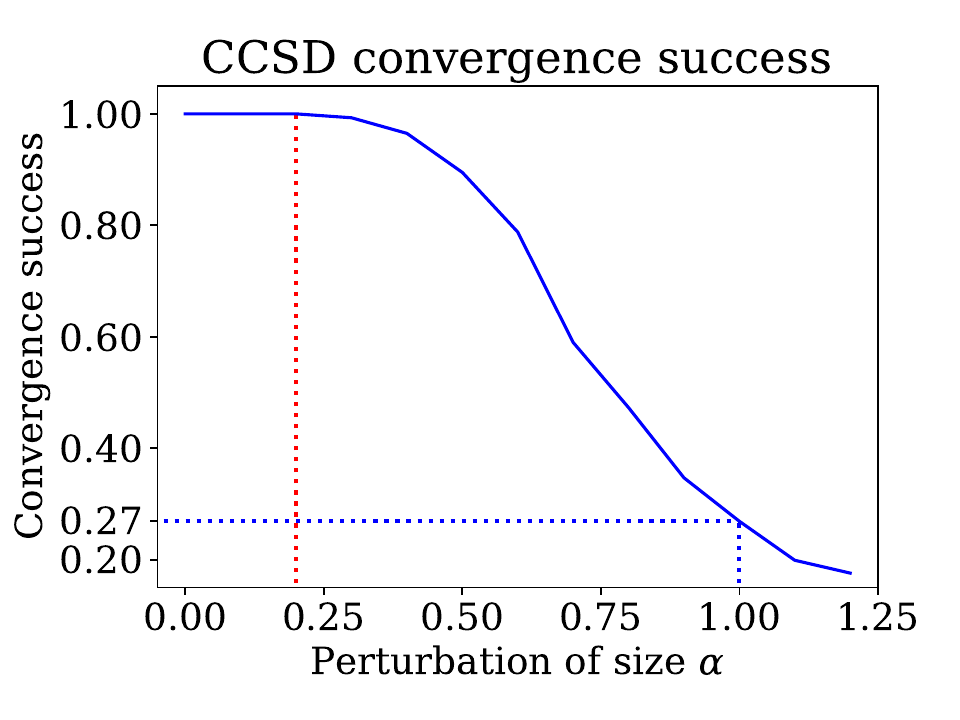}
     \end{subfigure}
     \hfill
     \begin{subfigure}[t]{0.45\textwidth}
    \includegraphics[width = \textwidth]{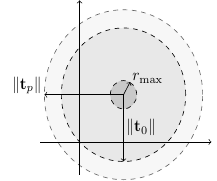}
     \end{subfigure}
      \caption{\label{fig:CCBasinOfAtt}
      Left: The convergence success of 1000 simulations initializing CCSD with the optimal solution ${\bf t}_0$ plus a random perturbation ${\bf t}_p$ of size $\alpha = \Vert {\bf t}_p \Vert/\Vert {\bf t}_0\Vert$. The red dotted line indicates $r_{\rm max}$. The blue dotted line indicates the success rate within a ball around ${\bf t}_0$ of radius $\Vert {\bf t}_0\Vert$.\\
      Right: Schematic representation of the different regions of convergences around ${\bf t}_0$.}
\end{figure}

This shows that $r_{\rm max}$ is approximately $0.2\; \Vert {\bf t}_0 \Vert $. Moreover, this shows that beyond this point convergence towards ${\bf t}_0$ is by no means guaranteed. In fact, for an arbitrary initialization that is $\Vert {\bf t}_0 \Vert$ away from ${\bf t}_0$, the success rate is only $27\%$. Being oblivious about the physical motivation of this initial guess, one could argue that it is quite surprising that Newton's method converges for the initial guess zero. 

We now compare the new bound to the CC roots derived in~\cite{faulstich2023algebraic} with the existing bounds reported in e.g.~\cite{piecuch2000search}.
To that end, we compute the roots corresponding to CCS and CCD for two-electron systems, i.e., $N=2$, for different numbers of spin orbitals $N_B$. This shows that using the truncation varieties and their profound mathematical structures dramatically improved the bounds to the CC roots, see Fig.~\ref{fig:CC_bounds_bernd}.

\begin{figure}[ht!]
    \centering\includegraphics[width = \textwidth]{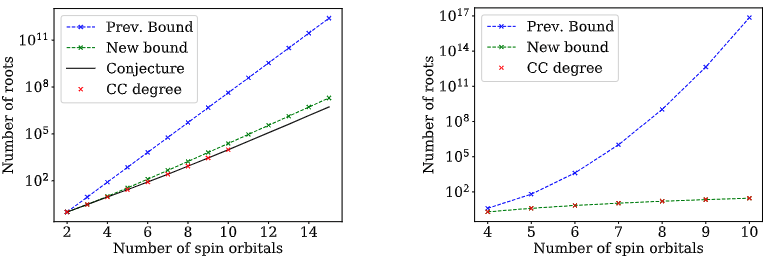}
    \caption{Bounds to the number of roots for 2 electron systems for a varying number of spin orbitals using the CCS method (left) and CCD method (right). For more details see~\cite{faulstich2023algebraic}.}
    \label{fig:CC_bounds_bernd}
\end{figure}

This reduction in the bounds together with the incorporation of truncation varieties in the computational procedure allowed for severe numerical advancements enabling the computation of the full root structure for true molecular systems like lithium hydride (see Fig.~\ref{fig:LiH_energies})~\cite{faulstich2023algebraic} using CCD.

\begin{figure}[ht!]
    \centering
    \includegraphics[width = 0.5 \textwidth]{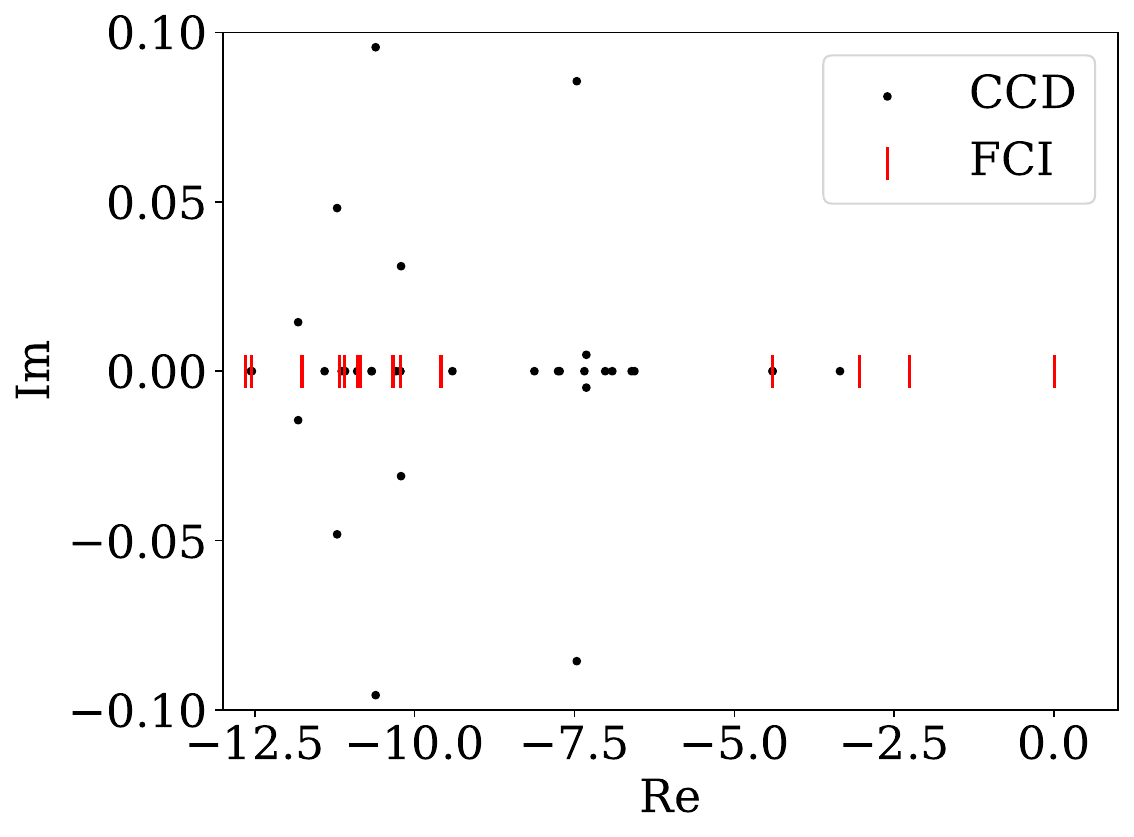}
    \caption{Comparison of energy spectra obtained via FCI and CCD for lithium hydride~\cite{faulstich2023algebraic}.}
    \label{fig:LiH_energies}
\end{figure}

We emphasize that these advances are far from a straightforward application of off-the-shelf computational algebra tools. 
Instead, they result from a sophisticated combination of multiple techniques, underscoring the complexity and innovation of the approach.
The general computational procedure comprises two major steps:

1. The set-up of an initial system from which the homotopy continuation starts. This initial system is specific for the number of electrons, the number of spin orbitals employed for the discretion of the Hamiltonian, and the used CC truncation level $\sigma$ as defined in Sec.~\ref{sec:boundingRoots}. We emphasize that in this implementation, the initial system can be reused when computing CC solutions at the truncation level $\sigma$ for systems with the same number of electrons and basis functions. 

2. Once the initial system for a target system configuration is set up, we employ a parametric homotopy approach as described in Sec.~\ref{sec:h-cont} that connects the initial system with the targeted system.

\section{Conclusion}

This article provides a self-contained educational review of the latest mathematical developments in coupled cluster (CC) theory from a computational chemistry perspective. 
To that end, we started this review article with a foundational introduction to CC theory, employing an algebraic approach. This particular formulation offers a rigorous and mathematically elegant framework, thereby facilitating a deeper understanding of the underlying principles. Additionally, in an effort to ensure comprehensive coverage and to augment the article's self-contained nature, we have incorporated a detailed analysis of the matrix structures that emerge within the realm of second quantization. This includes an exploration of their theoretical underpinnings and practical implications in computational chemistry, providing valuable context and enhancing the overall utility of this review for researchers in the field.
\\

We then explore a variety of analytical frameworks and methods used in CC theory, with a focus on their contributions to establishing local existence and uniqueness of the CC solutions. We delve into the local analysis based on Zarantonello's Lemma, a technique pioneered by Schneider~\cite{schneider2009analysis}, which has significantly influenced the field by its application in various CC methods, including the continuous single-reference CC method~\cite{rohwedder2013continuous, rohwedder2013error}, the extended CC method~\cite{laestadius2018analysis}, and the tailored CC ansatz~\cite{faulstich2019analysis}.
Further, we explore the graph-based framework for CC methods developed by Csirik and Laestadius~\cite{csirik2023coupled1,csirik2023coupled2}. This section highlights the versatility of the framework and its utility in comparing various CC methods, encompassing even multireference approaches.
We then delved into the latest numerical analysis results analyzing the single reference CC method developed by Hassan, Maday, and Wang. This segment decodes the complex ansatz from a computational chemistry viewpoint and encapsulates key findings from their research presented in~\cite{hassan2023analysis,hassan2023analysis2}, offering readers a comprehensive understanding of this cutting-edge area in CC theory.\\

Furthermore, our review extends to the algebraic geometry approach within CC theory. This unique perspective not only illuminates the intricate root structure inherent in CC equations but also paves the way for novel computational paradigms. These emerging methodologies have the potential to form the cornerstone of future CC computational strategies. In our discussion, we delve into the overarching principles of the algebraic approach and incorporate an overview of the most recent numerical advancements that have been made in this area~\cite{faulstich2019analysis,faulstich2023algebraic}.

\section*{Acknowledgments}

The author is thankful for useful discussions with Andre Leastadius, Mihály Csirik, Muhammad Hassan, and Svala Sverrisdóttir.

\newpage
\bibliographystyle{IEEEtranS}
\bibliography{lib.bib}

\end{document}